\documentclass{amsart}

\usepackage[latin1]{inputenc}
\usepackage{amsmath}
\usepackage{amsthm}
\usepackage{amssymb}
\usepackage{latexsym}
\usepackage{xspace}
\usepackage{tikz}

%\usepackage{ifpdf}
%\ifpdf
%\DeclareGraphicsRule{*}{mps}{*}{}
%\else
%\DeclareGraphicsRule{*}{eps}{*}{}
%\fi

\newcommand{\ignore}[1]{}

\newcommand{\Nat}{\mathbb{N}}

\newcommand{\Transsys}{\ensuremath{\mathcal{T}}}
\newcommand{\Prop}{\ensuremath{\mathcal{P}}}
\newcommand{\Act}{\ensuremath{\Sigma}}
\newcommand{\Var}{\ensuremath{\mathcal{V}}}
\newcommand{\true}{\ensuremath{\mathtt{t\!t}}}
\newcommand{\false}{\ensuremath{\mathtt{f\!f}}}
\newcommand{\Mubox}[1]{\ensuremath{[ #1 ]}}
\newcommand{\Mudiam}[1]{\ensuremath{\langle #1 \rangle}}

\newcommand{\mucalc}{\ensuremath{\mathcal{L}_\mu}\xspace}
\newcommand{\boolcalc}{\ensuremath{\mathcal{B}_\mu}\xspace}
\newcommand{\hes}{HES\xspace}

\newcommand{\sub}[1]{\ensuremath{\operatorname{Sub}(#1)}}
\newcommand{\cl}[1]{\ensuremath{\operatorname{Cl}(#1)}}

\newcommand{\md}[1]{\ensuremath{\operatorname{md}(#1)}}

\newcommand{\sem}[3]{\ensuremath{{[\![#1]\!]}_{#2}^{#3}}}

\def\newarrow#1{\mathop{{\hbox{\setbox0=\hbox{$\scriptstyle{#1\quad}$}{$%
\mathrel{\mathop{\setbox1=\hbox to
\wd0{\rightarrowfill}\ht1=3pt\dp1=-2pt\box1}\limits^{#1}}%
$}}}}}
\newcommand{\Transition}[3]{\ensuremath{#1 \newarrow{#2} #3}}

\renewcommand{\epsilon}{\varepsilon}

\newcounter{linenumber}

\theoremstyle{plain}
%-------------------
\newtheorem{theorem}{Theorem}[section]
\newtheorem{lemma}[theorem]{Lemma}

\newtheorem{proposition}[theorem]{Proposition}

%------------------------
\theoremstyle{definition}
%------------------------

\newtheorem{example}[theorem]{Example}

%
%-------------------
\theoremstyle{plain}

%%% Local Variables:
%%% mode: latex
%%% TeX-master: "main"
%%% End:

\title%[On Guarded Transformation in \mucalc and Extensions]
{On Guarded Transformation in the Modal Mu-Calculus}% and Syntactic Extensions}

\begin{document} 

\title{On Guarded Transformation in the Modal Mu-Calculus}

\author{Florian Bruse}\address{Florian Bruse \\ School of Elect.\ Eng.\ and Comp.\ Sc., University of Kassel \\ Germany}\email{florian.bruse@uni-kassel.de}
\author{Oliver Friedmann}\address{Oliver Friedmann \\ Dept.\ of Comp.\ Sc., University of Munich \\ Germany}\email{oliver.friedmann@ifi.lmu.de}
\author{Martin Lange}\address{Martin Lange \\ School of Elect.\ Eng.\ and Comp.\ Sc., University of Kassel \\ Germany}\email{martin.lange@uni-kassel.de}

\begin{abstract}
Guarded normal form requires occurrences of fixpoint variables in a $\mu$-calculus-formula to occur under the scope of a modal operator. The literature contains guarded transformations that effectively bring a $\mu$-calculus-formula into guarded normal form. We show that the known guarded transformations can cause an exponential blowup in formula size, contrary to existing claims of polynomial behaviour. We also show that any polynomial guarded transformation for $\mu$-calculus-formulas in the more relaxed vectorial form gives rise to a polynomial solution algorithm for parity games, the existence of which is an open problem. We also investigate transformations between the $\mu$-calculus, vectorial form and hierarchical equation systems, which are an alternative syntax for alternating parity tree automata. 
\end{abstract}

\thanks{This work was supported by the European Research Council under the European Community's Seventh Framework Programme [ERC grant agreement no 259267]}
\thanks{Preprint submitted to: \emph{Logic Journal of the IGPL}}

\maketitle

\section{Introduction}
The modal $\mu$-calculus \mucalc, as introduced by Kozen \cite{Kozen83}, is a fundamental modal fixpoint logic which subsumes many other
temporal \cite{Emerson90a,Dam:1994:CEF} and dynamic logics \cite{Kozen83,EL86}. The paper at hand is concerned with \emph{guarded normal
  form} for the $\mu$-calculus, or simply \emph{guarded form}. A formula is guarded if every occurrence of any fixpoint variable is under
the scope of a modal operator inside its defining fixpoint formula. For instance, the formula $\nu Y.\Diamond\mu X.p \vee \Diamond X$ is
guarded, whereas $\nu Y.\mu X.(p \wedge Y) \vee \Diamond X$ is not: it is possible to pass through the syntax tree of the formula from the
quantification for the variable $Y$ down to on occurrence of that variable without traversing through a modal operator $\Diamond$ or $\Box$.
This is not possible for the first formula given here.

Intuitively, guarded form ensures that in the evaluation of a formula in a transition system by fixpoint iteration, one proceeds along at least one transition between two iterations of the same variable.  Guarded form is also very useful in procedures that check for satisfiability or validity of a set of formulas and handle fixpoint formulas by unfolding: guardedness synchronises the unfolding of all formulas in a set. Many constructions require formulas to be explicitly normalised in guarded form or assume that w.l.o.g., formulas can be brought into guarded form with polynomial overhead \cite{CONCUR::Kaivola1995,CONCUR::JaninW1996,KVW00,Walukiewicz/00,conf/tacas/Mateescu02,conf/tableaux/Jungteerapanich09}. Others can cope with unguarded formulas, but then their constructions require the solving of non-trivial decision problems \cite{focs91*368}. Only few deal explicitly with unguarded formulas \cite{conf/tableaux/FriedmannL11}, but they require special tricks in order to handle unguardedness.

It has been known for quite a while that every \mucalc-formula can effectively be transformed into an equivalent guarded formula. The first guarded transformation routine---described by Banieqbal and Barringer, as well as Walukiewicz---explicitly rewrites Boolean subformulas into disjunctive or conjunctive normal form \cite{BB:temlfp,Walukiewicz/00}. Clearly, this increases the sizes of formulas exponentially in the worst case. Such a blowup may not be considered harmful for results concerning the expressive power of \mucalc, but it clearly makes a difference for complexity-theoretic results. Kupferman et al.\ \cite[Thm.~2.1]{KVW00} notice that the transformation into Boolean normal form is unnecessary and present an optimised variant of this guarded transformation procedure. They claim that it only involves
a linear blowup, but this is not true. The problem lies in the very last statement of their proof: ``\ldots~by definition $\varphi'(\lambda y.\varphi'(y)) \in \cl{\lambda y.\varphi'(y)}$~\ldots'' While this is true, it is not true that if $\lambda y. \varphi'(y)$ is a subformula of $\varphi$---and hence in the Fischer-Ladner closure of $\varphi$---then also $\varphi[\varphi'(\lambda y. \varphi'(y))/\lambda y. \varphi'(y)]$ is in the closure of $\varphi$. Unfolding from the inside out---a principle that forms the core of the guarded transformation---produces exactly this kind of situation. Repeated application of this principle will, in the worst case, result in an exponential growth in the number of distinct subformulas.
 
Later, another guarded transformation procedure was given by Mateescu \cite%[Sect.~2.2]
{conf/tacas/Mateescu02}. It turns out that
it is practically the same algorithm as that given by Kupferman et al.\ earlier. However, Mateescu estimates it to 
create an exponential blowup in formula size when used na\"{\i}vely. On the other hand, he claims that ``\ldots~each
fixpoint subformula \ldots will be duplicated only once \ldots and reused \ldots leading to \ldots $|t(\varphi)| \le |\varphi|^2$.'' 
Again, this is a false observation because the replacement of certain variables by constants can duplicate subformulas.
Thus, the formula sharing trick that Mateescu proposes as a solution, which is just a way of saying that the size measured
in terms of number of subformulas shall only be quadratic, does not work and the blow-up is not just quadratic. 

Neumann and Seidl study guarded transformation in the more general context of hierarchical equation systems \cite{conf/csl/SeidlN99} with
monotone operators. The modal operators $\Box$ and $\Diamond$ are monotone, so hierarchical equation systems are a generalisation of the
\emph{vectorial form} that is sometimes used to put multiple nestings of least or greatest fixpoints of the same kind into one block
\cite{an01}. Hierarchical equation systems can also be seen as an alternative syntax to present alternating parity automata. The semantics
of \mucalc with vectorial form is simply given via simultaneous fixpoint definitions rather than parametric ones. The Beki\'c Lemma shows
that this does not gain additional expressive power \cite{Bekic84} but it may gain exponential succinctness because the only known
transformations of \mucalc with vectorial form into \mucalc without incur an exponential blow-up in formula size. The same holds for
hierarchical equation systems.

Neumann and Seidl even give a guarded transformation algorithm for equation systems that result from an \mucalc-formula and claim that it is
polynomial. This claim seems to be correct. However, the resulting equation system does not have the nice structure that
equation systems corresponding to \mucalc-formulas have. Thus, their guarded transformation for \mucalc-formulas is polynomial, yet it does
not produce equivalent guarded \mucalc-formulas but only equivalent guarded equation systems. Translating these back into a guarded
\mucalc-formula incurs an exponential blowup, given current knowledge.

Here we study the problem of guarded transformation for the modal $\mu$-calculus with the aim of correcting false claims found in the
literature and providing bounds on the complexities of such transformations.  The structure of the paper is as follows: in
Section~\ref{sec:prel} we introduce \mucalc and its syntactic extensions vectorial form and \hes as well as several notions of guardedness.
Then we briefly describe and analyse the guarded transformation procedures by Kupferman et al.\ and Mateescu, respectively that of Neumann
and Seidl in Section~\ref{sec:oldgt} and show that it can produce formulas of at least exponential size (measured as the number of different
subformulas). We also give additional upper bounds on the elimination of $\epsilon$-transitions in alternating automata and on translations
from a syntactic extension to flat \mucalc. In Section~\ref{sec:lower} we show that guarded transformation for \mucalc-formulas in vectorial
form and for \hes is hard, namely not easier than solving parity games. As mentioned above, we also show that unfolding
vectorial form or an \hes into a non-vectorial formula has the same lower complexity bound. This means that any polynomial algorithm for one
of these problems would yield a polynomial algorithm for solving parity games, and this would settle a major and long-standing open problem.
Finally, in Section~\ref{sec:eval} we discuss the consequences of this work for previously acclaimed results about \mucalc that can be found
in the literature, we discuss the relation between alternating automata, vectorial form, and \mucalc, and we sketch some open questions with
possible routes of attack.

%%% Local Variables: 
%%% mode: latex
%%% TeX-master: "main.tex"
%%% End: 

\section{The Modal $\mu$-Calculus}
\label{sec:prel}

\subsection{Syntax and Semantics}
A \emph{labeled transition system} (LTS) over a set of \emph{action names} $\Act$ and a set of \emph{atomic propositions}
$\Prop$ is a tuple $\Transsys = (S,\Transition{}{}{},\ell)$ where $S$ is a set of \emph{states}, 
$\Transition{}{}{} \subseteq S \times \Act \times S$ defines a set of \emph{transitions} between states, labeled with action names, and $\ell: S \to 2^{\Prop}$ labels each state with the set of atomic propositions that are true in this state.

%\subsubsection*{Syntax}
Let $\Act$ and $\Prop$ be as above and let $\Var$ be a set of variables. Formulas of the modal $\mu$-calculus 
\mucalc in positive normal form are those that can be derived from $\varphi$ in 
\[
\varphi \enspace ::= \enspace q \mid \overline{q} \mid X \mid \varphi \vee \varphi \mid \varphi \wedge \varphi
  \mid \Mudiam{a}\varphi \mid \Mubox{a}\varphi \mid \mu X.\varphi \mid \nu X.\varphi,
\]
where $X \in \Var$, $q \in \Prop$, and $a \in \Act$.

The operators $\mu$ and $\nu$ act as \emph{binders} for the variables in a formula.  A \emph{free occurrence} of a variable
$X$ is therefore one that does not occur under the scope of such a binder. A \emph{closed} formula is one that does not
have any free variables. We write $\sigma$ for either $\mu$ or $\nu$.

Let $\sub{\varphi}$ denote the set of \emph{subformulas} of $\varphi$. Define the \emph{size} of a formula $\varphi$ as
the number of its distinct subformulas, i.e.\ $|\varphi| := |\sub{\varphi}|$. We assume all $\mucalc$-formulas to be \emph{well-named} in the sense that each variable is bound at most once.  Hence, for every \mucalc-formula there is a partial function $\operatorname{fp}_{\varphi}: \Var \to \sub{\varphi}$ which maps a variable $X$ that is bound in $\varphi$ by some operator $\sigma X.\psi$ to its \emph{defining fixpoint formula} $\psi$. 

As usual, we use the abbreviations $\true = q \vee \overline{q}$ and $\false = q \wedge
\overline{q}$ for an arbitrary $q$. Given a fixpoint binder $\sigma$, we write $\hat \sigma = \false$ if $\sigma = \mu$ and $\hat \sigma = \true$ if $\sigma = \nu$.

We write $\varphi[\psi/X]$ to denote 
the formula that results from $\varphi$ by replacing every free occurrence of the variable $X$ in it with the 
formula $\psi$. 

 The \emph{modal depth} $\operatorname{md}$ is the maximal nesting depth of modal operators in a formula, formally defined as follows.
\begin{align*}
\md{q} \enspace = \enspace \md{\overline{q}} \enspace = \enspace \md{X} \enspace &:= \enspace 0 \\
\md{\varphi \vee \psi} \enspace = \enspace \md{\varphi \wedge \psi} \enspace &:= \enspace \max \{\md{\varphi}, \md{\psi} \} \\
\md{\Mudiam{a}\varphi} \enspace = \enspace \md{\Mubox{a}\varphi} \enspace &:= \enspace 1 + \md{\varphi} \\
\md{\mu X.\varphi} \enspace = \enspace \md{\nu X.\varphi} \enspace &:= \enspace \md{\varphi}
\end{align*}

An important fragment of \mucalc that we consider later is the \emph{propositional $\mu$-calculus} \boolcalc. It consists of all $\varphi \in \mucalc$ such that $\md{\varphi} = 0$. Hence, \boolcalc-formulas do not contain any subformulas of the form $\Mudiam{a}\psi$ or $\Mubox{a}\psi$. A formula is called \emph{purely propositional} if it belongs to \boolcalc and does not contain any fixpoint operators.

%\subsubsection*{Semantics}
Formulas of \mucalc are interpreted in states of an LTS $\Transsys = (S,\Transition{}{}{},\ell)$. 
Let $\rho\colon\Var \to 2^{\mathcal{S}}$ be an environment used to interpret 
free variables. We write $\rho[X \mapsto T]$ to denote the environment which maps $X$ to $T$ and behaves like $\rho$ 
on all other arguments. The semantics of \mucalc is given as a function $\sem{\cdot}{}{}$ mapping a formula to the set of states where it holds w.r.t.\ the environment.
\begin{align*}
\sem{q}{\rho}{\Transsys}                   &= \{ s \in S \mid q \in \ell(s) \} \\
\sem{\overline{q}}{\rho}{\Transsys}        &= \{ s \in S \mid q \not\in \ell(s) \} \\
\sem{X}{\rho}{\Transsys}                   &= \rho(X) \\
\sem{\varphi \vee \psi}{\rho}{\Transsys}   &= \sem{\varphi}{\rho}{\Transsys} \cup \sem{\psi}{\rho}{\Transsys} \\
\sem{\varphi \wedge \psi}{\rho}{\Transsys} &= \sem{\varphi}{\rho}{\Transsys} \cap \sem{\psi}{\rho}{\Transsys} \\
\sem{\Mudiam{a}\varphi}{\rho}{\Transsys}   &= 
  \{ s \in S \mid \exists t \in \sem{\varphi}{\rho}{\Transsys} \mbox{ with } \Transition{s}{a}{t} \} \\
\sem{\Mubox{a}\varphi}{\rho}{\Transsys}    &=
  \{ s \in S \mid \forall t \in S: \mbox{ if } \Transition{s}{a}{t} \mbox{ then } t \in \sem{\varphi}{\rho}{\Transsys} \} \\
\sem{\mu X.\varphi}{\rho}{\Transsys}       &= \bigcap \{ T \subseteq S \mid \sem{\varphi}{\rho[X \mapsto T]}{\Transsys} \subseteq T \} \\
\sem{\nu X.\varphi}{\rho}{\Transsys}       &= \bigcup \{ T \subseteq S \mid T \subseteq \sem{\varphi}{\rho[X \mapsto T]}{\Transsys} \} 
\end{align*}
Two formulas $\varphi$ and $\psi$ are equivalent, written $\varphi \equiv \psi$, iff for all LTS $\Transsys$ and all environments 
$\rho$ we have $\sem{\varphi}{\rho}{\Transsys} = \sem{\psi}{\rho}{\Transsys}$. We may also write $\Transsys, s \models_\rho \varphi$ 
instead of $s \in \sem{\varphi}{\rho}{\Transsys}$. 

\subsection{Syntactic Extensions} 
Sometimes it is convenient to relax the restrictions on variable dependency. \emph{Vectorial form} allows one to do so. Let $X_1,\ldots,X_m$ be
variables and let $\psi_1,\ldots,\psi_m$ be formulas, possibly with free occurrences of the $X_i$. For both $\sigma \in \{\mu,\nu\}$ and any
$1 \leq j \leq m$,
\[
\Phi = \sigma X_j. \left\{\begin{aligned} X_1&.&&\psi_1 \\ &\vdots \\ X_m&.&&\psi_m \end{aligned}\right\}
\]
is a formula in \emph{$m$-vectorial form}, and the $X_i$ are considered bound in $\Phi$.  We say that a formula is in \emph{vectorial form} if it is in $m$-vectorial form for some $m$. Hence, formulas in $1$-vectorial form are ordinary \mucalc-formulas as introduced above. The curly brackets are used to indicate that the $m$ defining fixpoint equations are to be seen as a set; there is no implicit order among
them with the exception of a variable $X_j$ marked as \emph{entry variable}. By convention, the entry variable is the variable that occurs first if it is not explicitly given. We consider multiple instances of the same vectorial form, but with different entry variable, to be the essentially the same subformula in the sense that $k$  occurrences of the same block of size $k'$ with different entry variables contribute to the size of the formula with $k + k'$ instead of $kk'$.

The semantics of a vectorial formula is defined as follows: given a transition system $\Transsys = (S, \Transition{}{}{}, \ell),$ and an environment $\rho$, a vector of fixpoint formulas
\[ \Psi =  \left\{\begin{aligned} X_1&.&&\psi_1 \\ &\vdots \\ X_m&.&&\psi_m \end{aligned}\right\}\]
defines a monotone operator $S^m \to S^m$ via $(S_i)_{i \leq m} \mapsto \sem{\varphi_i}{\rho[X_1\mapsto S_1,\dotsc X_m \mapsto S_m]}{\Transsys}$. Let $T = (T_1,\dotsc, T_m) \subseteq S^m$ be the least fixpoint of this operator, and let $X_k$ be the entry variable. Then
\[
\sem{\mu X_k. \Psi}{\rho}{\Transsys} = T_k
\]
and accordingly for greatest fixpoints. 

\begin{example}
\label{ex:vectrans}
Consider the following 3-vectorial formula where $\Box\psi$ is used to abbreviate $\bigwedge_{a \in \Sigma} \Mubox{a}\psi$.
\[
\mu X.\left\{\begin{aligned}
X&.&& \Box\false \vee \Mudiam{a}Y \vee Z \\
Y&.&& \Mudiam{b}(Y \vee X) \\
Z&.&& \Mudiam{a}X \vee \Mudiam{c}Z
\end{aligned}\right\}
\]
It expresses ``there is a maximal path labelled with a word from $(ab^+ + c^*a)^*$''. It abbreviates the following
formula in non-vectorial form. %\todo{change this maybe}{}
\[
\mu X.\Box\false \vee \Mudiam{a}(\mu Y.\Mudiam{b}(Y \vee X)) \vee \mu Z.\Mudiam{a}X \vee \Mudiam{c}Z
\]
\end{example}

\emph{Hierarchical equation systems}, or \hes, generalise \mucalc-formulas by lifting the restriction of tree-like variable dependencies (see below). An equation has the form $Z = e_Z$, where $Z \in \Var$ and $e_Z$ can be generated from $e$ in
\[
e \enspace ::= \enspace q \mid \overline{q} \mid X \mid e \vee e \mid e \wedge e
  \mid \Mudiam{a} e \mid \Mubox{a} e 
\]
with $X \in \Var$, $q \in \Prop$, and $a \in \Act$ as above. An \hes is a finite set $\mathcal{S}$ of equations with disjoint left sides, an ordered partition $\{(S_1, \sigma_1), \dotsc, (S_k, \sigma_k)\}$ of the equations with fixpoint qualifications and a designated entry variable from the first partition class. Here, $\sigma_i \in \{\mu, \nu\}$ as per usual. A partition class $S_i$ is also called a block. The size of an \hes $\mathcal{S}$ with left hand sides in $Z$ is defined as $\Sigma_{z \in Z} |e_z|$. A variable is \emph{free} in $S_i$ if it is not the left hand side of an equation in some $S_j$ for $j \geq i$. An \hes is called closed if no variable is free in $S_1$.
An equation system is called  \emph{boolean} if it is built without use of $\Mudiam{a}$ and $\Mubox{a}$ for any $a \in \Act$.

Following Neumann and Seidl \cite{conf/csl/SeidlN99}, we define the semantics of \hes on the powerset lattice of a given transition system $\Transsys = (S, \Transition{}{}{}, \ell)$. For a system $\mathcal{S}$ with partition $\{(S_1,\sigma_1),\dotsc,(S_k, \sigma_k)\}$, define $Z_i$ as the set of left sides of equations in $S_i$. For two assignments $\rho_1$ and $\rho_2$ with disjoint domains and equal range, define $\rho_1 + \rho_2$ as the assignment that behaves like $\rho_1$ on the domain of $\rho_1$ and behaves like $\rho_2$ on the domain of $\rho_2$. The semantics of $\mathcal{S}$ is defined inductively over the blocks. Given an assignment $\rho\colon\operatorname{free}(S_i)\to 2^{S}$, a block defines a function $f\colon (Z_i \to 2^S) \to (Z_i \to 2^S)$ via
\[
(\tau,Z) \mapsto \sem{e_z}{\rho+\tau+\sem{S_{i+1}}{\rho+\tau}{\Transsys}}{\Transsys}
\]
where $\sem{S_{i+1}}{\rho + \tau}{\Transsys}$ is the empty assignment if $i = k$. Let $\hat{f}$ denote the least fixpoint of $f$ with respect to $\tau$ if $\sigma_i = \mu$, respectively the greatest such fixpoint if $\sigma_i = \nu$. 
Define  $\sem{S_i}{\rho}{\Transsys}(Z)$ as

\[
\sem{S_i}{\rho}{\Transsys}(Z) = \left\{ 
\begin{aligned}
&\hat{f}(Z)                                     &&\text{    if }Z\in Z_i \\
&\sem{S_{i+1}}{\rho + \hat{f}}{\Transsys}(Z)    &&\text{    if }Z \in Z_j, j > i
\end{aligned}
\right.
\]
Finally, given an assignment $\rho\colon\operatorname{free}(S_1)\to 2^S$, set $\sem{\mathcal{S}}{}{\Transsys} = \sem{S_1}{\rho}{\Transsys}$ and for closed $\mathcal{S}$ write $\Transsys,s \models \mathcal{S}$ if $s \in  \sem{\mathcal{S}}{}{\Transsys}(Z)$, and the entry variable $Z$ is clear from the context or has been explicitly designated. Two \hes $\mathcal{S}$ and $\mathcal{S}'$ are equivalent if they agree on their free variables and their outermost variable and, for every $\Transsys = (S, \Transition{}{}{},\ell)$ and every $\rho\colon\operatorname{free}(\mathcal{S})\to 2^S$, we have $\sem{\mathcal{S}}{\rho}{\Transsys} = \sem{\mathcal{S}'}{\rho}{\Transsys}$.
   
It is not hard to see that we can always combine adjacent blocks with the same fixpoint qualification, hence we can consider the fixpoint qualifiers of the blocks to be strictly alternating.

Interestingly, \hes and alternating parity tree automata in their symmetric form are different forms of notation for the same thing: states of the automaton correspond to the variables in the equation systen, the block partition corresponds to the priority function over the states of an automaton, and the equations make up the transition function.

\subsection{Equational Form, Variable Order, and Variable Dependencies}

\mucalc-formulas and vectorial \mucalc-formulas can be brought into an alternative syntax that resembles \hes. Consider an \mucalc-formula $\varphi$ with fixpoint variables in $X_1,\dotsc,X_n$. Without loss of generality, $\varphi$ is of the form $\sigma X_1.\psi_1$.\footnote{Otherwise, introduce a vacuous outermost fixpoint quantifier. This can be done without altering alternation depth.} Proceeding from an innermost fixpoint formula, convert a subformula of the form $\sigma_i X_i. \psi_i$ to a single-equation block $(\{X_i = e_{X_i}\},\sigma_i)$ by a translation that abstracts away the difference between fixpoint quantification of the form $\sigma X. \psi$ and occurrences of fixpoint variables. This translation $e$ is defined via
\[
\begin{aligned}
e(q)                    &=  q  &
e(\Mudiam{a} \psi)      &=  \Mudiam{a} e(\psi) \\
e(\overline{q})         &=  \overline{q} &
e(\Mubox{a} \psi)       &=  \Mubox{a} e(\psi) \\
e(\psi_1 \wedge \psi_2) &=  e(\psi_1) \wedge e(\psi_2) \qquad &
e(\sigma' X'.\psi)      &=  X' \\
e(\psi_1 \vee \psi_2)   &=  e(\psi_1) \vee e(\psi_2) &
e(X)                    &=  X.
\end{aligned}
\]
This process yields a set of equation blocks $\{(\{X_1 = e_{X_1}\},\sigma_1),\dotsc,(\{X_n = e_{X_n}\},\sigma_n)\}$, partially ordered by the subformula relationship of the original formulas. We call the strict, transitive version of this order the \emph{priority order}. The equational syntax generalizes accordingly to formulas in vectorial form: fixpoint variables in the same vector share a block as in the case of \hes. As in the case of plain \mucalc-formulas, the subformula relationship induces a strict, transitive partial order. In this case, two variables from the same block are incomparable. Any linearization of this partial order produces a hierarchical equation system. 

Consider an \hes with fixpoint variables $\mathcal{X} = X_n,\dotsc,X_0$. Construct a \emph{variable dependency graph} with node set $\mathcal{X}$ as follows: there is an edge from variable $X_i$ to variable $X_j$ if and only if $X_j$ appears in $e_{X_i}$. 

Note that the variable dependency graph for an \hes resulting from an \mucalc-formula is always a tree with back edges compatible with the
priority order. This means that any edge in the dependency graph either goes from a node to an immediate successor in the priority order, or it is a loop, or it goes to a predecessor in the priority order. However, edges never go from a node to an indirect successor, e.g. to the son of a son. In this sense, priority order and variable dependency almost coincide and can be deduced from the subformula nesting.

The variable dependency graph of a vectorial formula or an \hes is generally not a tree with back edges. However, formulas in vectorial form can be characterized in the following way: a formula is in vectorial form if and only if the set of equations can be partitioned into blocks and the blocks can be partially ordered respecting the priority order such that variable dependency edges only stay in a block, go from one block to an immediate successor block, or go to a---not necessarily immediate---predecessor block. In this sense, the block structure forms a tree with back edges. Moreover, all equations in a block are of the same type, i.e., qualified only by $\mu$ or only by $\nu$.

\subsection{Guardedness and Weak Guardedness}

We extend the variable dependency graph by annotating edges with the information whether or not a modal operator has been passed between the
two variables. More precisely, we call an edge from variable $X$ to variable $Y$ guarded if in the equation $X = e_X$ variable $Y$ occurs
under the scope of a modal operator. Note that there can be both a guarded and an unguarded edge from a variable to another. Call the
resulting graph the \emph{guardedness graph}.

An occurence of a variable in an equation is called \emph{unguarded} if it is part of an unguarded cycle in the guardedness graph, and it is called \emph{guarded} if it is not unguarded. An equation system is called \emph{guarded} if there are no unguarded occurrences of variables in its guardedness graph, i.e., there are no unguarded cycles. An equation system is called \emph{downwards guarded} if unguarded edges only occur strictly upwards in the partial order induced by variable priority. An equation system is called \emph{$\epsilon$-free} if there are no unguarded edges in its guardedness graph. Clearly, $\epsilon$-freeness implies downwards guardedness, which in turn implies guardedness.

\begin{example}
The formula
\[
\label{eq:nonvecform}
\mu X.\Box\false \vee \Mudiam{a}(\mu Y.\Mudiam{b}(Y \vee X)) \vee \mu Z.\Mudiam{a}X \vee \Mudiam{c}Z
\]
from Example~\ref{ex:vectrans} expresses ``there is a maximal path labeled with a word from $(ab^+ + c^*a)^*$'' and is guarded. However, consider the following 
formula which expresses the slightly different property ``there is a maximal path labeled with a word from $(ab^+ + c^*)^*$''.
\[
\mu X.\Box\false \vee \Mudiam{a}(\mu Y.\Mudiam{b}(Y \vee X)) \vee \mu Z.X \vee \Mudiam{c}Z
\]
It is not guarded; in particular, there is an occurrence of $X$---the latter one---which is not guarded in its defining fixpoint formula,
which happens to be the entire formula in this case.
\end{example}

We say that an occurrence of a variable $X$ is \emph{weakly guarded} if it is guarded or if all unguarded cycles for this occurrence of this
variable have length at least $2$, i.e. $X$ does not appear unguarded in $e_X$. For plain \mucalc-formulas, this means that $X$ is either
guarded or it occurs under the scope of another fixpoint quantifier in its defining fixpoint subformula $\sigma X.\psi_X$. Note that weak
guardedness is indeed weaker than guardedness, hence, not being weakly guarded entails not being guarded.

Consider, for instance, the formula $\mu X.q \vee (\mu Y. (q \wedge X) \vee (\overline{q} \wedge Y) \vee \Mudiam{a}Y)$. Then $Y$ has both a guarded and an unguarded occurrence, whereas the only occurrence of $X$ is not guarded but it is weakly guarded.

A \emph{guarded transformation} for \mucalc or its syntactic extensions is a function %$\tau: \mucalc \to \mucalc$ 
such that $\tau(\varphi)$ is guarded and $\tau(\varphi) \equiv \varphi$ for every $\varphi \in \mucalc$. 
%A guarded transformation for a \hes is a function that transforms a \hes into an equivalent, guarded \hes. We do not  consider the notion of weak guardedness in the context of \hes. 

%%% Local Variables:
%%% mode: latex
%%% TeX-master: "main.tex"
%%% End:

\section{Upper Bounds and Failure Results}
\label{sec:oldgt}

\subsection{Guarded Transformation Without Vectorial Form}

The guarded transformation procedures for non-vectorial formulas by Kupferman et al.\ and Mateescu rely on two principles.
The first principle is the well-known fixpoint unfolding.

\begin{proposition}
\label{prop:unfold}
For every $\sigma X.\varphi \in \mucalc$ we have $\sigma X.\varphi \equiv \varphi[\sigma X.\varphi / X]$. 
\end{proposition}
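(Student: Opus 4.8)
The plan is to prove the fixpoint unfolding identity $\sigma X.\varphi \equiv \varphi[\sigma X.\varphi/X]$ by appealing to the Knaster--Tarski fixpoint theorem, which is exactly what the semantics of $\mu$ and $\nu$ encodes. Fix an arbitrary LTS $\Transsys = (S,\Transition{}{}{},\ell)$ and an environment $\rho$; write $F\colon 2^S \to 2^S$ for the map $T \mapsto \sem{\varphi}{\rho[X\mapsto T]}{\Transsys}$. The first step is to observe that $F$ is monotone, since $\varphi$ is in positive normal form: every variable (in particular $X$) occurs only positively, and each syntactic construct ($\vee$, $\wedge$, $\Mudiam{a}$, $\Mubox{a}$, nested $\mu$/$\nu$) is monotone in its arguments. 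This is a routine structural induction on $\varphi$, so I would state it as a standing fact rather than belabour it.

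Next I would record the substitution lemma: for any formula $\psi$,
\[
\sem{\varphi[\psi/X]}{\rho}{\Transsys} \enspace = \enspace \sem{\varphi}{\rho[X\mapsto \sem{\psi}{\rho}{\Transsys}]}{\Transsys}.
\]
This too is a straightforward induction on the structure of $\varphi$; the only mild subtlety is the case of a nested binder $\sigma' Y.\psi'$, where well-namedness guarantees $Y \neq X$ and $Y$ does not occur free in $\psi$, so the substitution commutes with the environment update. Applying this lemma with $\psi := \sigma X.\varphi$ gives
\[
\sem{\varphi[\sigma X.\varphi/X]}{\rho}{\Transsys} \enspace = \enspace F\bigl(\sem{\sigma X.\varphi}{\rho}{\Transsys}\bigr),
\]
so the proposition reduces to the claim that $\sem{\sigma X.\varphi}{\rho}{\Transsys}$ is a \emph{fixpoint} of $F$, i.e. $F(P) = P$ where $P := \sem{\sigma X.\varphi}{\rho}{\Transsys}$.

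For the $\mu$ case, the semantics gives $P = \bigcap\setx{T}{F(T)\subseteq T}$, the least pre-fixpoint of $F$. The Knaster--Tarski argument is the usual two inclusions. First, $F(P) \subseteq P$: for every pre-fixpoint $T$ we have $P \subseteq T$, hence $F(P) \subseteq F(T) \subseteq T$ by monotonicity; intersecting over all such $T$ yields $F(P)\subseteq P$. Second, $P \subseteq F(P)$: from $F(P)\subseteq P$ and monotonicity we get $F(F(P)) \subseteq F(P)$, so $F(P)$ is itself a pre-fixpoint, whence $P \subseteq F(P)$ by definition of $P$ as the least one. Combining, $F(P) = P$. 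The $\nu$ case is dual, using $P = \bigcup\setx{T}{T\subseteq F(T)}$, the greatest post-fixpoint. Since $\Transsys$ and $\rho$ were arbitrary, $\sigma X.\varphi \equiv \varphi[\sigma X.\varphi/X]$.

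I do not expect any serious obstacle here; the only place that needs care is the substitution lemma's binder case, and that is handled cleanly by the well-namedness convention already adopted in the paper. Everything else is the textbook Knaster--Tarski computation together with monotonicity of the positive-normal-form connectives.
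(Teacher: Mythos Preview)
Your proof is correct and is the standard Knaster--Tarski argument for fixpoint unfolding. The paper, however, does not supply a proof of this proposition at all: it simply labels it as ``the well-known fixpoint unfolding'' and states it without further justification, so there is nothing to compare against beyond noting that your argument is exactly the textbook one the authors are implicitly invoking.
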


The second principle states how occurrences that are not weakly guarded can be eliminated. Remember that $\hat{\sigma}$ is
either $\true$ or $\false$ depending on $\sigma$ being $\nu$ or $\mu$.

\begin{proposition}[\cite{KVW00,conf/tacas/Mateescu02}]
\label{prop:replacenwg}
Let $\sigma X.\varphi \in \mucalc$ and let $\sigma X.\varphi'$ result from $\sigma X.\varphi$ by replacing with  $\hat{\sigma}$ every occurrence
of $X$ that is not weakly guarded. Then $\sigma X.\varphi \equiv \sigma X.\varphi'$.
\end{proposition}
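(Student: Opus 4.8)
The plan is to prove the equivalence locally at the fixpoint: fix an LTS $\Transsys = (S,\Transition{}{}{},\ell)$ and an environment $\rho$, and study the two monotone operators $f,g\colon 2^S\to 2^S$ given by $f(T) = \sem{\varphi}{\rho[X\mapsto T]}{\Transsys}$ and $g(T) = \sem{\varphi'}{\rho[X\mapsto T]}{\Transsys}$; since $\varphi,\varphi'$ are in positive normal form these are monotone, so $\sem{\sigma X.\varphi}{\rho}{\Transsys}$ and $\sem{\sigma X.\varphi'}{\rho}{\Transsys}$ are their least ($\sigma=\mu$) or greatest ($\sigma=\nu$) fixpoints, and it suffices to show these coincide for arbitrary $\Transsys,\rho$. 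The first step is a purely syntactic observation: an occurrence of $X$ in $\varphi$ fails to be weakly guarded exactly when the path from the root of $\varphi$ to that occurrence runs through $\vee$- and $\wedge$-nodes only, i.e.\ passes neither a modal operator nor a fixpoint quantifier; all remaining occurrences of $X$ sit inside a subformula $\chi$ of $\varphi$ whose root is a modality, a fixpoint quantifier, an atom or a variable distinct from $X$, and hence are weakly guarded. Collecting these ``top-level'' occurrences of $X$ and pushing the top Boolean skeleton of $\varphi$ into disjunctive normal form (using distributivity of $\cup$ over $\cap$ in the powerset lattice) yields \mucalc-formulas $A$ and $B$ in which every occurrence of $X$ is weakly guarded, together with, for every $T$, the identity
\[
f(T) \;=\; \bigl(T \cap \sem{A}{\rho[X\mapsto T]}{\Transsys}\bigr) \;\cup\; \sem{B}{\rho[X\mapsto T]}{\Transsys},
\]
i.e.\ informally $\varphi$ behaves like $(X\wedge A)\vee B$ over environments that fix every variable but $X$. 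Because the not-weakly-guarded occurrences of $X$ are precisely the displayed conjunct $X$, the formula $\varphi'$ behaves like $(\hat\sigma\wedge A)\vee B$, which is $B$ when $\sigma=\mu$ and $A\vee B$ when $\sigma=\nu$. Abbreviating $a(T) := \sem{A}{\rho[X\mapsto T]}{\Transsys}$ and $b(T) := \sem{B}{\rho[X\mapsto T]}{\Transsys}$ we thus have $f(T) = (T\cap a(T))\cup b(T)$ throughout, while $g(T)=b(T)$ in the $\mu$-case and $g(T)=a(T)\cup b(T)$ in the $\nu$-case.

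With this decomposition the fixpoint comparison is immediate. For $\sigma=\mu$: from $g(T)=b(T)\subseteq f(T)$ for all $T$ we get $\operatorname{lfp}(g)\subseteq\operatorname{lfp}(f)$; conversely, putting $U := \operatorname{lfp}(g)$ we have $b(U)=U$, hence $f(U) = (U\cap a(U))\cup b(U) = U$ since $U\cap a(U)\subseteq U$, so $U$ is a fixpoint of $f$ and therefore $\operatorname{lfp}(f)\subseteq U$. For $\sigma=\nu$ the argument is dual: $f(T) = (T\cap a(T))\cup b(T)\subseteq a(T)\cup b(T) = g(T)$ gives $\operatorname{gfp}(f)\subseteq\operatorname{gfp}(g)$, and for $V := \operatorname{gfp}(g)$ we have $a(V)\cup b(V)=V$, so $a(V)\subseteq V$, $V\cap a(V)=a(V)$, and $f(V) = a(V)\cup b(V) = V$, whence $\operatorname{gfp}(g)\subseteq\operatorname{gfp}(f)$. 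In both cases the two relevant fixpoints agree; as $\Transsys$ and $\rho$ were arbitrary, $\sigma X.\varphi\equiv\sigma X.\varphi'$.

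The routine parts (monotonicity of all the operators, the lattice identities $U\cap a(U)\subseteq U$ and $a(V)\subseteq a(V)\cup b(V)$, and the trivial case in which $X$ has no not-weakly-guarded occurrence, so $\varphi'=\varphi$) need no comment. The one step that requires genuine care — and where the informal ``$(X\wedge A)\vee B$'' shorthand must be made precise — is the decomposition of the top Boolean context into a single conjunct applied to a disjunction: one has to carry out the normal-form rewriting of the Boolean skeleton, handle the boundary cases where a bad $X$ sits directly at the root (so $A\equiv\true$) or where the remaining disjunction is empty (so $B\equiv\false$), and verify that every occurrence of $X$ absorbed into $A$ or $B$ is indeed weakly guarded, which holds because such an occurrence lies under a modal operator or under another fixpoint quantifier inside $\sigma X.\varphi$. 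As an alternative to the normal-form computation one may prove the displayed identity for $f$ by a direct structural induction on the top Boolean skeleton of $\varphi$; I would use whichever is shorter to write out.
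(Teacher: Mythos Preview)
The paper does not give its own proof of this proposition; it is stated with a citation to \cite{KVW00,conf/tacas/Mateescu02} and used as a black box. So there is nothing in the paper to compare against directly.

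Your argument is correct and is essentially the classical one from the cited sources. The key observation---that the not-weakly-guarded occurrences of $X$ are exactly those reached from the root of $\varphi$ through a purely Boolean context---matches the paper's characterisation of weak guardedness for flat formulas, and the semantic DNF decomposition $f(T)=(T\cap a(T))\cup b(T)$ is precisely the device Walukiewicz and Banieqbal--Barringer used (syntactically) and that Kupferman--Vardi--Wolper later observed can be done purely semantically. Your fixpoint calculations are sound: in the $\mu$-case $g\le f$ gives one inclusion and $b(U)=U\Rightarrow f(U)=U$ gives the other; the $\nu$-case is the exact dual. The caveats you flag in the last paragraph (boundary cases $A\equiv\true$, $B\equiv\false$; verifying that the $X$'s absorbed into $A,B$ are weakly guarded) are the right ones and are routine. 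Nothing is missing.
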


These two principles can be combined to a simple guarded transformation procedure. Starting with the innermost
fixpoint bindings, one replaces all occurrences of the corresponding variables that are not weakly guarded by $\true$ or 
$\false$ using Proposition~\ref{prop:replacenwg}. Note that for the innermost fixpoint subformulas, the concepts of 
being weakly guarded and being guarded coincide. Thus, the innermost fixpoint subformulas are guarded after this step. 

For outer fixpoint formulas this only ensures that all remaining occurrences are weakly guarded. However, 
by the induction hypothesis, all inner ones are already guarded, and unfolding them using Prop.~\ref{prop:unfold} puts all
weakly guarded but unguarded occurrences of the outer variable under a $\Mudiam{a}$- or $\Mubox{a}$-modality.
Hence, only occurrences that are either guarded or not weakly guarded survive, and the latter can be eliminated
using Proposition~\ref{prop:replacenwg} again.

Let $\tau_0$ denote the guarded transformation which works as described above. Kupferman et al.\ claim that the worst-case
blowup in formula size produced by $\tau_0$ is linear, Mateescu claims that it is quadratic. We will show 
that it is indeed exponential. Consider the family of formulas
\[
\Phi_n \enspace := \enspace \mu X_1 \ldots \mu X_n. (X_1 \vee \dotsb \vee X_n) \vee  \Mudiam {a}(X_1 \vee \dotsb \vee X_n) .
\]
\begin{theorem}
We have $|\Phi_n| = 3n+1$ and $|\tau_0(\Phi_n)| = \Omega(2^n)$.
\end{theorem}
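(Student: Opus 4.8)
The plan is to analyze the behaviour of $\tau_0$ on $\Phi_n$ carefully, tracking the set of distinct subformulas that arise. The count $|\Phi_n| = 3n+1$ is immediate: the distinct subformulas are the $n$ variables $X_1,\dots,X_n$, the disjunction $D := X_1 \vee \dots \vee X_n$ together with its $n-2$ proper sub-disjunctions $X_1$, $X_1 \vee X_2$, \dots (that is, $n-1$ disjunction-subformulas in total since $X_1$ is already counted as a variable — so $n-2$ new ones), the formula $\Mudiam{a}D$, the top disjunction $D \vee \Mudiam{a}D$, and the $n$ fixpoint formulas $\mu X_i.\dots\mu X_n.(D \vee \Mudiam{a}D)$; a short bookkeeping check gives exactly $3n+1$. (I would state this count precisely in the final write-up but it is routine.)

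The heart of the argument is to follow $\tau_0$ from the inside out. For the innermost variable $X_n$, the occurrence of $X_n$ inside $D$ (outside the modality) is not weakly guarded, so Proposition~\ref{prop:replacenwg} replaces it by $\false$; the occurrence of $X_n$ under $\Mudiam{a}$ stays. So after processing $X_n$ the body becomes, roughly, $(X_1 \vee \dots \vee X_{n-1} \vee \false) \vee \Mudiam{a}(X_1 \vee \dots \vee X_n)$, and the inner fixpoint $\mu X_n.\psi_n$ is now guarded. Now we move to $X_{n-1}$: its occurrence in the outer disjunction is still unguarded, but now — crucially — there is also an occurrence of $X_{n-1}$ inside $\mu X_n.\psi_n$, namely under the $\Mudiam{a}$ there, so $X_{n-1}$ is weakly guarded at that spot but not guarded (the path from the binder of $X_{n-1}$ to it does not cross a modality; it only passes the binder of $X_n$). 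The transformation therefore unfolds $\mu X_n.\psi_n$ (Proposition~\ref{prop:unfold}) to expose that occurrence under a genuine modality, and only then can it replace the remaining not-weakly-guarded occurrences of $X_{n-1}$ by $\false$. The key point to establish is that this unfolding substitutes $\mu X_n.\psi_n$ for $X_n$ inside a context that already contains $\mu X_n.\psi_n$, thereby roughly doubling the relevant part of the formula — and this must be iterated for $X_{n-1}, X_{n-2}, \dots, X_1$.

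Concretely, I would define $\Psi_k$ to be the subformula of $\tau_0(\Phi_n)$ corresponding (after processing variables $X_n, X_{n-1}, \dots, X_k$) to the $k$-th fixpoint block, and show by downward induction on $k$ that $\Psi_k$ contains at least two disjoint copies of $\Psi_{k+1}$ as subformulas — one from the "original" occurrence of $X_{k+1}$ under a modality and one freshly created by the unfolding step that guards the weakly-guarded occurrences of $X_k$. Since the copies are not identical as formulas once the outer variables differ (the substitution $[\dots/X_k]$ distinguishes them), counting distinct subformulas gives $|\Psi_k| \ge 2\,|\Psi_{k+1}| - O(1)$, whence $|\Psi_1| = \Omega(2^n)$ and a fortiori $|\tau_0(\Phi_n)| = \Omega(2^n)$.

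The main obstacle I anticipate is making the bookkeeping of "which occurrences are weakly guarded at which stage" completely rigorous, and in particular verifying that the doubling genuinely produces $\emph{distinct}$ subformulas rather than shared ones — the whole point of the paper is that the sharing claimed by Kupferman et al.\ and Mateescu fails, so one has to pin down exactly why the two copies of $\Psi_{k+1}$ living inside $\Psi_k$ cannot be identified. The reason is that after unfolding, the inner copy of a fixpoint formula sits inside the scope where the still-active outer variable $X_k$ has been partially eliminated differently in the two copies; tracking this requires care with the order in which $\tau_0$ does substitutions and which variable replacements have already been performed. Once that is nailed down, the recurrence and the final bound are straightforward.
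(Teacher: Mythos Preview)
Your approach is in the right spirit but takes a different route from the paper. You aim to count distinct subformulas directly and establish a doubling recurrence $|\Psi_k| \ge 2|\Psi_{k+1}| - O(1)$; the paper instead argues via \emph{modal depth}. It shows that $\varphi_i := t'(\mu X_i \ldots \mu X_n.(\varphi \vee \Mudiam{a}\varphi))$ contains $\varphi$ at modal depth $2^{n+1-i}$, using the recurrence $\varphi_i = f_{X_i}(\varphi_{i+1})[\mu X_i.\, f_{X_i}(\varphi_{i+1})/X_i]$ together with the observation that $X_i$ occurs in $\varphi_{i+1}$ at its full modal depth, so substituting a copy of $f_{X_i}(\varphi_{i+1})$ there doubles that depth. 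Since a formula of modal depth $d$ necessarily has at least $d+1$ distinct subformulas (the successive arguments along a maximal chain of nested modalities have pairwise different modal depths), the exponential lower bound on $|\tau_0(\Phi_n)|$ follows in one line.

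The advantage of the paper's argument is precisely that it dissolves the obstacle you flag: you never have to argue that the two ``copies'' contribute \emph{distinct} subformulas, because formulas at different modal depths are automatically distinct. Your direct-counting plan can likely be pushed through, but the bookkeeping is real and your description of the mechanism is slightly off: in Mateescu's $t'$ the unfolding performed at stage $k$ is of $\mu X_k$ itself (not of the inner $\mu X_{k+1}$ as your narrative suggests), so $\Psi_k$ is not literally two copies of $\Psi_{k+1}$ but rather $f_{X_k}(\Psi_{k+1})$ with $\mu X_k.\,f_{X_k}(\Psi_{k+1})$ substituted at the guarded $X_k$-positions. The required distinctness then comes from $X_k$ being free in one ``copy'' and replaced by the fixpoint formula in the other. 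This can be made rigorous, but the modal-depth shortcut gets you there without any of it.
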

\begin{proof}
The first claim about the linear growth of $\Phi_n$ is easily verified. We prove that $\tau_0(\Phi_n)$ 
contains a subformula of modal depth at least $2^{n-1}$, which entails exponential size of $\tau_0(\Phi_n)$.

Let $\varphi =  (X_1 \vee \dotsb \vee X_n)$.
Mateescu's guarded transformation transforms  a (strict) subformula of the form $\sigma X. \psi$, into $f_X(t'(\psi))[\sigma X. f_X(t'(\psi))/X]$, where $t'$ is the guarded transformation for subformulas and $f_X$ replaces unguarded occurrences of $X$ by $\hat{\sigma}$. Moreover, $t'(\varphi \vee \Mudiam{a} \varphi) = \varphi \vee \Mudiam{a}\varphi$. For $2 \leq i \leq n$, define $\varphi_i = t'(\mu X_i \dotsb X_n. (\varphi \vee \Mudiam{a}\varphi)$. Then $\varphi_n = f_{X_n}(\varphi \vee \Mudiam{a} \varphi)[\mu X_n. f_{X_n}(\varphi \vee \Mudiam{a} \varphi)/X_n]$, and generally, $\varphi_i = f_{X_i}(\varphi_{i+1})[\mu X_i. f_{X_i}(\varphi_{i+1})/X_i]$. We show that $\varphi_i$ contains $\varphi$ at modal depth $2^{(n+1-i)}$. Clearly $\varphi_n$  contains $\varphi$ at modal depth $2 = 2^1$.  Since $\varphi_{i} = f_{X_i}(\varphi_{i+1})[\mu X_i. f_{X_i}(\varphi_{i+1})/X_i]$, we have that, if $\varphi_{i+1}$ contains $\varphi$ at modal depth $2^{n-i}$, then $\varphi_i$ contains $\varphi$ at double the modal depth, or $2\cdot 2^{n-i} = 2^{n+1-i}$. Hence, $t'(\mu X_2 \dotsb \mu X_n. (\varphi \vee \Mudiam{a}\varphi)) = \varphi_2$ contains a formula of modal depth $2^{n-1}$. Finally, Mateescu defines $\tau_0(\sigma X. \psi) = \sigma X. f_X(t'(\psi))$, whence $\tau_0(\Phi_n) = \mu X_1. f_X(\varphi_2)$, and the proof is finished.
\end{proof}
However, the blowup for this guarded transformation procedure is never worse than exponential.
\begin{theorem}
For all every \mucalc-formula $\varphi$, the size of $\tau_0(\varphi)$ is in $\mathcal{O}(2^{|\varphi|})$.
\end{theorem}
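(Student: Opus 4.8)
The plan is to bound $|\tau_0(\varphi)|$, the number of distinct subformulas of the transformed formula, by a recurrence along the nesting tree of the fixpoint subformulas of $\varphi$. Recall that $\tau_0$ is built from an auxiliary transformation $t'$ on subformulas and works from the inside out: for a fixpoint subformula $\rho_i = \sigma X_i.\psi_i$ occurring strictly inside $\varphi$ one has $t'(\rho_i) = f_{X_i}(t'(\psi_i))[\rho_i'/X_i]$ with $\rho_i' := \sigma X_i.f_{X_i}(t'(\psi_i))$, while $\tau_0(\varphi) = \sigma X_1.f_{X_1}(t'(\psi_1))$ for the outermost fixpoint (assume w.l.o.g.\ $\varphi = \sigma X_1.\psi_1$). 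Two elementary size estimates keep the analysis under control: replacing occurrences of a variable by $\hat\sigma$ enlarges the number of distinct subformulas by at most a constant, so $|f_X(\chi)| \le |\chi| + \mathcal{O}(1)$; and if $\rho$ has no free occurrence of $X$, then every subformula of $\eta[\rho/X]$ is either of the form $\theta[\rho/X]$ for some subformula $\theta$ of $\eta$ or a subformula of $\rho$, so $|\eta[\rho/X]| \le |\eta| + |\rho|$. The second estimate applies at every step of $\tau_0$ because $\rho_i'$ binds $X_i$, so $t'(\rho_i)$ has no free $X_i$ and the substitutions never cascade --- this is exactly why the number of subformulas stays singly exponential even though (by the previous theorem) the modal depth already blows up exponentially.

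Let $\rho_1,\dots,\rho_n$ be the fixpoint subformulas of $\varphi$, set $s_i := |t'(\rho_i)|$, and let $k_i$ be the number of distinct subformulas of $\psi_i$ that are not subformulas of any immediate inner fixpoint subformula of $\psi_i$. Forming $t'(\psi_i)$ amounts to substituting $t'(\rho_j)$ for each immediate inner fixpoint $\rho_j$ of $\psi_i$, so a subformula of $t'(\psi_i)$ is either one of the $k_i$ subformulas of $\psi_i$ that lie \emph{above} all inner fixpoints (with each $\rho_j$ replaced), or a subformula of some $t'(\rho_j)$; hence $|t'(\psi_i)| \le k_i + \sum_j s_j$, the sum ranging over the immediate inner fixpoints. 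Combining this with the two estimates gives the recurrence
\[
 s_i \;\le\; |f_{X_i}(t'(\psi_i))| + |\rho_i'| \;\le\; 2\,|t'(\psi_i)| + \mathcal{O}(1) \;\le\; 2k_i + 2\sum_j s_j + \mathcal{O}(1) .
\]

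It remains to solve this recurrence. I would prove $s_i \le C\cdot 2^{|\rho_i|} - D$ by induction along the fixpoint tree, for suitably chosen constants $C, D$; applying the bound at $i = 1$ then gives $|\tau_0(\varphi)| \le s_1 = \mathcal{O}(2^{|\varphi|})$. For the inductive step the crucial combinatorial fact is that the number $r$ of immediate inner fixpoints of $\psi_i$ satisfies $r \le 2^{k_i}$: since $\varphi$ is well-named each $\rho_j$ occurs exactly once, so the $r$ immediate inner fixpoints sit at pairwise incomparable positions in $\psi_i$, the (at least) $r-1$ branching nodes above them are pairwise distinct subformulas of $\psi_i$, and well-namedness again forbids any of them to be a subformula of a single $\rho_j$ (that would put two distinct binders inside one $\rho_j$); hence $k_i \ge r-1 \ge \log_2 r$. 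Together with $|\rho_i| = 1 + |\psi_i| \ge 1 + k_i + \max_j |\rho_j|$ this makes $2^{|\rho_i|}$ large enough to absorb $2\sum_j s_j \le 2r\cdot\max_j(C\,2^{|\rho_j|} - D)$ plus the $2k_i + \mathcal{O}(1)$ overhead.

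I expect the main obstacle to be precisely this last step: choosing $C$ and $D$ so that the recurrence actually closes, in particular checking the tight cases (e.g.\ when $\psi_i$ is essentially a disjunction of its $r$ inner fixpoint subformulas, where the factor $r$ must be exactly cancelled by the $2^{k_i}$ coming from the branching nodes). Getting this right is what keeps the bound at $\mathcal{O}(2^{|\varphi|})$ rather than $2^{\Theta(|\varphi|\log|\varphi|)}$. Everything else --- the two size estimates and unfolding the recursive definition of $t'$ --- is routine. Combined with the previous theorem, this shows that the worst-case blow-up of $\tau_0$ is exactly singly exponential.
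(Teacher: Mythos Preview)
Your overall strategy --- work inside-out along the fixpoint nesting and argue that each fixpoint contributes at most a constant-factor blow-up --- is exactly the paper's. The paper's own proof is much terser: it simply asserts ``at most one doubling of size per fixpoint quantifier'' and concludes $2^{|\varphi|}$, without setting up an explicit recurrence. Your version tries to make this rigorous via the recurrence $s_i \le 2k_i + 2\sum_j s_j + \mathcal{O}(1)$, which is a reasonable way to go.

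There is, however, a genuine gap in one of your two ``elementary size estimates''. You claim $|f_X(\chi)| \le |\chi| + \mathcal{O}(1)$, reasoning that replacing a variable by $\hat\sigma$ only adds the subformulas of $\hat\sigma$. This is correct for a \emph{uniform} substitution $\chi[\hat\sigma/X]$, but $f_X$ is not uniform: it replaces only the occurrences of $X$ that are not weakly guarded. A subformula $\theta$ of $\chi$ that occurs both at an unguarded position and under a modality will give rise to two distinct subformulas in $f_X(\chi)$, namely $\theta$ itself and the version with the top-level $X$ replaced. For instance, with $\chi_n = \bigwedge_{i=1}^{n}\big((X \vee q_1 \vee \cdots \vee q_i) \wedge \Mudiam{a}(X \vee q_1 \vee \cdots \vee q_i)\big)$ one gets $|f_X(\chi_n)| \ge |\chi_n| + n$, so the additive constant is wrong. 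The correct bound is $|f_X(\chi)| \le 2|\chi| + \mathcal{O}(1)$.

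This does not destroy the argument, but it does mean the recurrence becomes $s_i \le 4k_i + 4\sum_j s_j + \mathcal{O}(1)$, and your inductive hypothesis $s_i \le C\cdot 2^{|\rho_i|} - D$ no longer closes as written: you would need $4r \le 2\cdot 2^{k_i}$, which fails for small $r$. One clean fix is to drop the recurrence in favour of the cruder global count the paper uses: each of the $n$ fixpoint quantifiers contributes at most a factor $4$ (one doubling for $f_X$, one for the unfolding), and since every fixpoint $\sigma X_j.\psi_j$ contributes at least the two distinct subformulas $\sigma X_j.\psi_j$ and $X_j$ by well-namedness, we have $2n \le |\varphi|$ and hence $4^n \le 2^{|\varphi|}$. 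The combinatorial lemma about branching nodes, while correct, is then not needed.
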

\begin{proof}
  Let $\varphi \in \mucalc.$ We analyze the size of $\tau_0(\varphi)$ from the inside out. Replacing unguarded occurrences of the innermost
  fixpoint quantifier with the default values will at most double the size of this formula. Unfolding it will double the size again. For a
  non-innermost, non-outermost quantifier, replacement of non-weakly guarded occurrences of variables will only affect subformulas created
  by unfolding. Since these are already accounted for in the blowup, changing some occurrences of variables to default values will not
  contribute to the blowup. Unfolding a non-innermost, non-outermost subformula will double its size at worst. The outermost formula is not
  unfolded. This makes at most one doubling of size per fixpoint quantifier, hence the resulting size is bounded from above by
  $2^{|\varphi|}$.
\end{proof}
Note that this guarded transformation procedure always produces downwards guarded formulas: occurrences of variables are always guarded, and fixpoint quantifiers are also always under the scope of a modal operator.

\subsection{Guarded Transformation With Vectorial Form and for \hes}

We do not study guarded transformation for vectorial formulas in particular, because by Lemma~\ref{hes-to-vec}, \hes can be converted into vectorial formulas with only polynomial blowup. This transformation keeps guardedness, but the resulting vectorial formula will not be downwards guarded and, hence, not be $\epsilon$-free.

Neumann and Seidl present guarded transformation in the context of \hes over distributive lattices with monotone operators
\cite{conf/csl/SeidlN99}. We stay with to the stipulations from Section~\ref{sec:prel} and only consider the powerset lattice with operators
$\Mudiam{a}$ and $\Mubox{a}$.

Neumann and Seidl give a guarded transformation procedure for a class of equation systems that contains the class of \hes obtained from
\mucalc-formulas in plain form. This transformation procedure runs in polynomial time and only produces a polynomial blowup. However, the
resulting equation system does not correspond to a flat \mucalc-formula. In Thorem~\ref{unravel-hes}, we see that turning an \hes into an
\mucalc-formula is likely to incur a blowup. Hence, the guarded transformation by Neumann and Seidl does not constitute a polynomial guarded
transformation for \mucalc-formulas.

The following example illustrates the loss in structure of the equation system representing an \mucalc-formula. Consider the following family of formulas:
\[
\mu X_1. \dotsc \mu X_n. X_1 \vee \dotsb \vee X_n \vee \Mudiam{a} \Big(\bigvee_{j = 1}^{m} \mu Y_j. \Mudiam{a} \big( Y_j \vee \bigvee_{i = 1}^n X_i\big)\Big)
\]
The associated equation system has a single block qualified with $\mu$ and looks like this:
\[
\begin{aligned}
X_1     &= X_2       &   Y_1 &=  \Mudiam{a} (Y_1 \vee X_1 \vee \dotsb \vee X_n) \\
        &\vdots      &       &\vdots  &       \\
X_{n-1} &= X_n       &   Y_m &=  \Mudiam{a} (Y_m \vee X_1 \vee \dotsb \vee X_n) \\ 
X_n     &= X_1 \vee \dotsb \vee X_n \vee \Mudiam{a}(Y_1 \vee \dotsb \vee Y_m)  & &   \\
\end{aligned}
\]
After the guarded transformation, the \hes looks like this:
\[
\begin{aligned}
X_1 &= \Mudiam{a} (Y_1 \vee \dotsb \vee Y_m)      & Y_1 &= \Mudiam{a} (Y_1 \vee X_1 \vee \dotsb \vee X_n)\\
    &\vdots                                     &     &\vdots \\
X_n &= \Mudiam{a} (Y_1 \vee \dotsb \vee Y_m)      &Y_m  &= \Mudiam{a} (Y_m \vee X_1 \vee \dotsb \vee X_n)\\
\end{aligned}
\]
This \hes has a variable dependency graph that is not a tree with back edges.

Note that the transformation of Neumann and Seidl produces downwards guarded formulas.

\subsection{Unraveling of Vectorial Formulas and \hes} 
We investigate how the different syntactic variants of \mucalc can be converted into each other.
\begin{lemma}
\label{hes-to-vec}
Any \hes can be converted into an equivalent vectorial formula with only polynomial blowup. This translations keeps guardedness, but the resulting vectorial formula will not be downwards guarded nor $\epsilon$-free.
\end{lemma}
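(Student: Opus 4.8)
The plan is to turn an \hes $\mathcal{S}$ with blocks $(S_1,\sigma_1),\dotsc,(S_k,\sigma_k)$ into a vectorial formula by the obvious ``inline the blocks'' recursion, and then argue that the blow-up is polynomial because each equation is copied only boundedly often. Concretely, I would process the blocks from the innermost one ($S_k$) outward. Each block $S_i$, together with its equations $\{X = e_X \mid X \in Z_i\}$, already looks almost like a vector $\sigma_i X_j.\{\,\dotsc\,\}$; the only mismatch is that the right-hand sides $e_X$ may mention variables of \emph{lower} blocks $S_j$ with $j>i$, which in a genuine vectorial formula are no longer in scope at that point. So the recursion is: having already built, for each entry variable $Y$ of each lower block $S_{i+1},\dotsc,S_k$, a closed-below vectorial formula $\Phi_Y$ computing that variable, substitute $\Phi_Y$ for every free occurrence of such a $Y$ in the equations of $S_i$, and wrap the result in $\sigma_i X_j.\{\dotsc\}$ to get $\Phi_{X_j}$ for each $X_j \in Z_i$. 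The whole \hes is then $\Phi_{Z}$ for the designated entry variable $Z$ of $S_1$. Correctness is a routine unwinding of the two semantics: the definition of $\sem{S_i}{\rho}{\Transsys}$ via the fixpoint $\hat f$ of the block operator is exactly the definition of the vectorial semantics $\sem{\sigma_i X_j.\Psi}{\rho}{\Transsys}$ once the lower blocks have been substituted in, so one proves $\sem{\Phi_{X}}{\rho}{\Transsys} = \sem{\mathcal{S}}{\rho}{\Transsys}(X)$ by induction on the block order, in each step invoking the inductive hypothesis on the sub-formulas plugged in for lower-block variables.

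The size bound is the point that needs care, and it is where the convention on vectorial size from Section~\ref{sec:prel} does the work. Naively, substituting $\Phi_Y$ for each occurrence of $Y$ could multiply sizes, since a variable may occur many times. The fix is to not substitute the whole formula but to keep sharing: when a variable $Y \in Z_{i+1}$ occurs in some $e_X$ with $X \in Z_i$, replace it by $Y$ itself but arrange that the block for $S_{i+1}$ appears once, with $Y$ as one of its (possibly non-entry) variables, and is referred to by name. In other words I would build a single vectorial formula whose blocks are exactly $S_1,\dotsc,S_k$, nested according to the priority order: $\sigma_1 Z.\{ S_1\text{-equations, with each lower-block reference pointing into the nested sub-vector}\}$, and inside $S_1$'s defining braces we place $\sigma_2(\dotsc).\{S_2\text{-equations}, \dotsc\}$ where a reference to $Y\in Z_{i+1}$ becomes ``the vector $\sigma_{i+1}Y.\{S_{i+1}\}$'' but counted, via the stated convention, as $k+k'$ rather than $kk'$ against the size. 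Each equation $e_X$ of $\mathcal{S}$ thus contributes its $|e_X|$ subformulas plus $O(1)$ bookkeeping per variable occurrence, and the total is $O(|\mathcal{S}|^2)$ in the worst case (or linear, depending on how one accounts occurrences), hence polynomial.

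Finally, the claims about guardedness: substitution never moves a variable out from under a modal operator, so an unguarded cycle in the resulting guardedness graph would project to an unguarded cycle in that of $\mathcal{S}$; hence guardedness is preserved. It is \emph{not} preserved downwards, and the resulting formula is not $\epsilon$-free, because an \hes may have unguarded edges that run downward in the priority order (a variable of $S_i$ referring unguardedly to a variable of $S_{i+1}$), and those become, in the vectorial formula, unguarded edges from an outer block into an inner one — exactly the situation excluded by downward guardedness and $\epsilon$-freeness. I would make this concrete with a two-block example, e.g.\ $X$ in a $\mu$-block with $X = Y$ and $Y$ in a $\nu$-block with $Y = \Mudiam{a}Y \vee X$: this is guarded (every cycle through $X$ passes the $\Mudiam{a}$) but the edge $X \to Y$ is unguarded and goes downward, so the corresponding $2$-vectorial formula witnesses that downward guardedness fails. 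The main obstacle is the size accounting: getting the substitution to respect the sharing convention so that the blow-up is genuinely polynomial rather than exponential, which is precisely the subtlety that, as the paper notes, trips up the naive ``just substitute'' approach elsewhere.
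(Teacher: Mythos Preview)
Your substitution/nesting approach has a real gap: it does not handle forward references that skip blocks. Recall the paper's characterisation of vectorial form: forward dependency edges may go only to an \emph{immediate} successor block. An \hes, however, may have an equation in $S_i$ that mentions some $Z \in Z_j$ with $j \geq i+2$. Your recursion replaces such a $Z$ by the sub-vector $\Phi_Z$ built from blocks $S_j,\dotsc,S_k$, but $\Phi_Z$ can have free variables from the \emph{intermediate} blocks $S_{i+1},\dotsc,S_{j-1}$ (coming from back-edges out of $S_j,\dotsc,S_k$). At the point where you splice $\Phi_Z$ into an $S_i$-equation, those intermediate variables are not in scope---there is no enclosing $\sigma_{i+1},\dotsc,\sigma_{j-1}$ binder around that occurrence---so the resulting formula is not even closed, let alone equivalent. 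The linear-nesting variant in your second paragraph has the same defect: you only say what happens to references into $Z_{i+1}$, never into $Z_j$ for $j>i+1$, and the sharing convention does not rescue you because the problem is semantic, not size-related.

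The paper's proof avoids substitution entirely and is much simpler. It keeps the block partition of the \hes as-is and fixes exactly the offending skip-forward edges: for a reference from $S_i$ to $Y \in S_j$ with $j \geq i+2$, it inserts fresh pass-through variables $H_{i+1},\dotsc,H_{j-1}$, one per intermediate block, with trivial equations $H_k = H_{k+1}$ and $H_{j-1} = Y$, and replaces $Y$ in $S_i$ by $H_{i+1}$. Now every forward edge goes to an immediate successor, so the system is vectorial; the blow-up is at most quadratic in the number of variables. This also explains the guardedness clause correctly: the new equations $H_k = H_{k+1}$ are themselves unguarded downward edges, so even an $\epsilon$-free input becomes non-downwards-guarded---your argument (``the \hes may already have unguarded downward edges'') only shows downward guardedness is not \emph{established}, not that the translation actively destroys it.
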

\begin{proof}
The desired partition for the equations is already present from the block structure of the \hes. It remains to modify the equations such that no variable dependency edges go from a block into a block that is a successor block, but not a direct successor. So assume there is an equation $X = e_X$ in block $S_i$ that mentions a variable $Y$ in a block $S_j$ such that $j \geq i+2$. Introduce new variables $H_ {i+1},\dotsc,H_{j-1}$ with associated equations $H_{k} = H_{k+1}$ for all $k < j-1$ and $H_{j-1} = Y$. Moreover, replace all occurences of $Y$ in $S_i$ by $H_{i+1}$. It is not hard to see that the resulting equation system is equivalent. By repeating this procedure for all offending variables, the equation system can be made a vectorial formula. 

Since each variable in a block that is not the first induces at most $k$ intermediate variables, where $k+2$ is the number of blocks, the blowup is polynomial, namely at most quadratic in the number of variables. Moreover, no new unguarded cycles are introduced, but downwards guardedness is obviously lost.
\end{proof}

\setcounter{footnote}{2}

\begin{theorem}
\label{unravel-hes}
Any \hes with $n$ variables can be transformed into an equivalent flat \mucalc formula with blowup factor $2^{n-1}$.
\footnotetext{An earlier version \cite{arxiv-version} of this article contained an incorrect version of the transformation 
in Theorem~\ref{unravel-hes}.}
\end{theorem}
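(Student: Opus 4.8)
The plan is an induction on the number $n$ of variables, peeling one variable off the outside at each step with the Beki\'c principle (\cite{Bekic84}); in the present setting the latter is essentially the block-wise definition of the semantics of \hes, namely that the fixpoint of the outermost block can be factored out, also across a block of the opposite fixpoint type. It is convenient to prove the slightly stronger statement that the entire solution tuple of an \hes $\mathcal S$ with $n$ variables, i.e.\ the family of flat \mucalc-formulas $(\widehat z)_z$ giving the value of every variable $z$ in terms of the free variables, can be represented as a shared DAG with at most $2^{n-1}|\mathcal S|$ distinct subformulas; the theorem is then the component of the entry variable.

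For $n=1$ an \hes is a single equation $x_1=e_{x_1}$ in a block $(\{x_1\},\sigma_1)$, which up to notation is already the flat formula $\sigma_1 x_1.e_{x_1}$. For $n\ge 2$ fix the entry variable $x_1\in S_1$ (block type $\sigma_1$) and let $\mathcal S^-$ be $\mathcal S$ with the equation of $x_1$ removed, $x_1$ deleted from $S_1$ (dropping $S_1$ if empty) and $x_1$ declared free; $\mathcal S^-$ has $n-1$ variables. By Beki\'c (within $S_1$, then factoring $S_1$ out over the inner blocks) the solution of $\mathcal S$ is obtained from that of $\mathcal S^-$, computed parametrically in $x_1$, by setting $\widehat{x_1}:=\sigma_1 x_1.\,e_{x_1}[\widehat z/z:z]$ and, for $z\ne x_1$, $\widehat z^{\,\mathcal S}:=\widehat z[\widehat{x_1}/x_1]$, $\alpha$-renaming whenever a substitution copies a binder. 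This is the construction; its correctness is a routine, repeated application of the Beki\'c equivalences.

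The size bound is the real content. Write $\|\mathcal S\|$ for the number of distinct subformulas in the shared DAG of the whole solution tuple. The subformulas of $\widehat{x_1}$ are those of the skeleton $e_{x_1}$ and those already present in the solution of $\mathcal S^-$ (which $e_{x_1}[\widehat z/z:z]$ only refers to, not copies), plus $\widehat{x_1}$ itself, so $|\sub{\widehat{x_1}}|\le |e_{x_1}|+\|\mathcal S^-\|+1$. Each remaining component $\widehat z^{\,\mathcal S}=\widehat z[\widehat{x_1}/x_1]$ arises from the $\mathcal S^-$-solution by replacing the leaf $x_1$ with $\widehat{x_1}$, so together they add at most $\|\mathcal S^-\|$ subformulas beyond those of $\widehat{x_1}$. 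Hence $\|\mathcal S\|\le |e_{x_1}|+2\|\mathcal S^-\|+1$, and with $|\mathcal S^-|=|\mathcal S|-|e_{x_1}|$ the induction hypothesis $\|\mathcal S^-\|\le 2^{n-2}|\mathcal S^-|$ gives $\|\mathcal S\|\le |e_{x_1}|+2^{n-1}(|\mathcal S|-|e_{x_1}|)+1\le 2^{n-1}|\mathcal S|$.

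I expect the main obstacle to be exactly this accounting, and it is presumably where the earlier, incorrect version failed. The factor $2$ per step---hence the exponent $n-1$---is forced because the solution of $\mathcal S^-$ is needed in two incarnations that cannot be shared: once with $x_1$ kept free, inside the body of $\widehat{x_1}$, and once with $x_1$ replaced by the already-built formula $\widehat{x_1}$, as the true values of the other variables. Any analysis that silently identifies these two incarnations, or that measures size per equation on the intermediate \hes rather than over the global DAG of subformulas, will give the wrong bound; once one commits to the DAG measure and keeps the incarnations apart, the induction closes with the claimed factor $2^{n-1}$.
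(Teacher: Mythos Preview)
Your proof is correct and achieves the stated bound, but it is organised differently from the paper's. The paper gives a direct, one-shot syntactic translation $t$ indexed by pairs $(X,\mathcal Y)$, where $\mathcal Y$ records which higher-priority variables are currently bound by an enclosing fixpoint; the size argument then counts how many distinct such pairs can arise along a nesting path. You instead run an outside-in induction via Beki\'c, peeling off the outermost variable $x_1$ and solving the residual system $\mathcal S^-$ parametrically. The two constructions in fact produce the same flat formula for the entry variable---your $\widehat z^{(k)}$ at level $k$ corresponds to the paper's $\sigma_z z.\psi_z^{\{x_1,\dots,x_{k-1},z\}}$---so the difference is in presentation and in the size analysis. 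Your recurrence $\|\mathcal S\|\le |e_{x_1}|+2\|\mathcal S^-\|+1$ makes the doubling mechanism completely explicit (one copy of the $\mathcal S^-$-solution inside $\widehat{x_1}$ with $x_1$ free, one copy outside with $x_1$ substituted), and the induction closes cleanly; the paper's counting of reachable index sets $\mathcal Y$ is terser but less transparent. What the paper's presentation buys is an explicit description of the target formula without reference to an inductive process; what yours buys is a self-contained size proof that does not depend on analysing which index sets are reachable. One small remark: the $\alpha$-renaming you mention is not needed for capture avoidance here (the free variables of $\widehat{x_1}$ are only the free variables of $\mathcal S$, none of which are bound in $\widehat z$), only for well-namedness, and dropping it keeps the subformula count exactly as you state.
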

\begin{proof}
Let $\mathcal{S} = \{(S_1, \sigma_1), \dotsc, (S_k, \sigma_k)\}$ be an \hes. Let $\mathcal{X}$ denote the set of variables in $\mathcal{S}$. Let $<'_{p}$ denote a topological sorting of the priority order such that the entry variable is the maximal element,  and let $>'_{p}$ denote its converse. For all $X \in \mathcal{X}$, let $\sigma_{X} = \sigma_i$ if $X = e_X \in S_i$. 

Clearly, an equation of the form $X = e_X$ from a $\sigma$-block can be converted into an \mucalc-formula of the form $\sigma X.\psi_X$, with variables $Y$ occurring in $e_X$ being either free or another formula $\psi_Y$ being plugged in there. More precisely, for every $\sigma$-variable $X \in \mathcal{X}$ and every subset $\mathcal{Y} \subseteq \mathcal{X}$ with $X\in \mathcal{Y}$, we define a formula $\sigma X. \psi_X^{\mathcal{Y}}$ with $\psi_X^{\mathcal{Y}} = t(e_X)$ according to
\[
\begin{aligned}
t(q)                    &=  q \\
t(\overline{q})         &=  \overline{q} \\
t(\psi_1 \wedge \psi_2) &=  t(\psi_1) \wedge t(\psi_2) \\
t(\psi_1 \vee \psi_2)   &=  t(\psi_1) \vee t(\psi_2) \\
t(\Mudiam{a} \psi)      &=  \Mudiam{a} t(\psi) \\
t(\Mubox{a} \psi)       &=  \Mubox{a} t(\psi) \\
t(X)                    &=  X \\
t(X')                   &=  X'\text{ if }X' \in \mathcal{Y}\text{ and }X' >'_p X  \\
t(X')                   &=  \sigma_{X'} X' \psi_{X'}^{\mathcal{Y}\cup\{X'\}}\text{ if }%X'\notin\mathcal{Y}\text{ and  }
X' <'_p X \\
t(X')                   &=  \sigma_{X'} X' \psi_{X'}^{\mathcal{Y}\cup\{X'\}\setminus\{Z\colon Z <'_p X'\}}\text{ if }X'\notin\mathcal{Y}\text{ and  }X' >'_p X. \\
\end{aligned}
\]
If $Z$ is the entry variable of $\mathcal{S}$, the formula $\varphi_{\mathcal{S}} = \sigma_Z Z.\psi_Z^{\{Z\}}$ is equivalent to $\mathcal{S}$. Since the translation for the non-fixpoint operators is obviously correct, it is enough to show that modal operators are properly nested in order to show this. We observe that no formula of the form $\sigma_Z Z.\psi_Z^{\mathcal{Z}}$ has free variables $Z'$ such that $Z <'_p Z$. Moreover, the nesting of the formulas is finite. Since for each new fixpoint nesting, either a variable is added or a variable is added and all variables below it are cleared from the set $\mathcal{Z}$, the nesting process is finite: in order to remove a variable, a variable that is higher in the priority order has to be added. This can only happen a finite number of times; if $Z$ has $k$ variables above it, then $Z$ is removed no more than $\lfloor2^{k-2}\rfloor$ times. This leaves the maximal nesting depth at $2^{n-1}$, where $n$ is the number of equations. Moreover, since no formula of the form $\sigma_Z Z.\psi_Z^{\mathcal{Z}}$ has free variables $Z'$ such that $Z <'_p Z$, we can reuse formulas such that the formula DAG has exactly $2^{n-1}$ nodes when restricted to fixpoint formulas. Hence, the total size of the formula can be bounded from above by $2^{n-1}\cdot|e|$, where $e$ is the equation of maximal size.
\end{proof}

Eliminating $\epsilon$-transitions from an alternating parity tree automaton is treated in the literature in several places. Since an alternating parity tree automaton is just another way of presenting \hes, we briefly consider the problem, too. Wilke \cite{conf/fsttcs/Wilke99} gives an argument that elimination of $\epsilon$-procedures can be done with exponential blowup, but keeping a linear number of states. The latter actually has to be replaced with a quadratic blowup in the number of states. 

Vardi \cite{DBLP:conf/icalp/Vardi98} considers the problem in the more general framework of two-way automata. Finally, the guarded transformation procedure of Neumann and Seidl \cite{conf/csl/SeidlN99} for general \hes can be modified to yield a  procedure that eliminates $\epsilon$-transitions. Since Wilke's proof does not directly present an algorithm and Vardi's proof caters to a much more general framework, we give a variant of Neumann and Seidl's procedure.
\begin{lemma}
Let $\mathcal{S} = \{(S_1, \sigma_1),\dotsc,(S_k,\sigma_k)\}$ be an \hes or an alternating parity tree automaton with $n$ equations of total size $\Sigma_{i \leq n} e_i = m$. Then there is an $\epsilon$-free \hes $\mathcal{S}'$ with $k$ blocks, $nk$ equations and of exponential size.
\end{lemma}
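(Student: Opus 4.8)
The plan is to adapt Neumann and Seidl's guarded transformation so that it removes \emph{every} unguarded edge, not merely every unguarded cycle; throughout we treat $\mathcal{S}$ as an \hes, the automaton case being identical up to terminology. The system $\mathcal{S}'$ is built so as to retain the block structure of $\mathcal{S}$: it keeps the same $k$ blocks with the same fixpoint qualifiers $\sigma_1,\dotsc,\sigma_k$, and it contains, for each variable $X$ of $\mathcal{S}$ and each block index $1\le i\le k$, a variable $X^{(i)}$ placed in block $i$. Writing $b(X)$ for the block of $X$ in $\mathcal{S}$, the home copy $X^{(b(X))}$ plays the role of the original $X$, and the entry variable of $\mathcal{S}'$ is $Z^{(b(Z))}$, where $Z$ is the entry variable of $\mathcal{S}$. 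This already yields the $k$ blocks and the $nk$ equations asked for.

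The equations of $\mathcal{S}'$ are obtained by a substitution process that \emph{saturates the unguarded part} of each right-hand side, carried out block by block in priority order. All equations in one block carry the same fixpoint type $\sigma$, so an unguarded self-loop --- an occurrence of $X$ in $e_X$ not under a modal operator --- can be deleted outright by replacing that occurrence with the default constant $\hat\sigma$, exactly as in Proposition~\ref{prop:replacenwg}. A remaining unguarded edge from $X$ to some other variable $Y$ is removed by \emph{inlining}: that occurrence of $Y$ is replaced by a copy of $e_Y$, in which occurrences of variables under modal operators are redirected to the relevant copies (this is where copies other than the home ones are introduced, at most $nk$ of them in total), whereas unguarded occurrences of variables inside $e_Y$ are treated recursively. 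To keep the recursion well-founded one threads through it a record of the variables already inlined along the current chain: as soon as a variable recurs on that chain, the occurrence lies on an unguarded cycle and is replaced by the default constant of the fixpoint type governing that cycle. Since there are only $n$ variables, every chain repeats a variable within $n$ steps, so the process terminates.

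Two facts are then immediate. First, $\mathcal{S}'$ is $\epsilon$-free: a variable occurrence survives the process only underneath a modal operator, everything else having been either inlined away or replaced by a constant, so no unguarded edges remain. Second, the size blow-up is at most exponential: eliminating the unguarded edges that touch a single block may substitute an entire right-hand side for up to $n$ unguarded variable occurrences and, iterating this within the block, multiplies the total size by at most an exponential factor; performing this for each of the $k\le n$ blocks keeps the size within $2^{\mathcal{O}(n\log n)}\cdot m$, which is exponential in $n$.

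The genuine obstacle is the correctness proof. One shows, by induction along the block order and using fixpoint unfolding (Proposition~\ref{prop:unfold}), the replacement principle (Proposition~\ref{prop:replacenwg}), and monotonicity, that $\sem{X^{(b(X))}}{\rho}{\Transsys}=\sem{X}{\rho}{\Transsys}$ for every LTS $\Transsys$ and every environment $\rho$. The subtle points are exactly those that the na\"{\i}ve picture misses: an unguarded edge is an ``instantaneous'' dependency, and reshuffling such dependencies --- pulling an unguarded occurrence of a variable across a block boundary, and collapsing a re-encountered variable on an unguarded cycle to a single constant --- is precisely the kind of manipulation justified by Beki\v{c}-style reasoning about simultaneous fixpoints. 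When an unguarded cycle passes through several blocks of differing fixpoint type, possibly with disjunctions at some of its steps and conjunctions at others, one has to argue carefully which block and which polarity fixes its contribution, and one has to check that the redirection of the guarded occurrences to the various copies $X^{(i)}$ is globally consistent. That analysis is where the work lies; $\epsilon$-freeness, termination, and the size estimate above are then routine.
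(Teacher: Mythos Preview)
Your architecture matches the paper's: introduce block-indexed copies of each variable (yielding the $nk$ equations), then eliminate unguarded edges by inlining, using the copies to record the highest priority visited along an $\epsilon$-path. Where you diverge is in the elimination mechanism and, more importantly, in how much you actually pin down.

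The paper does not do recursive inlining with cycle detection. It first removes unguarded \emph{self}-loops by Knaster--Tarski iteration: for $Z$ in a $\sigma$-block, set $e^0_Z=e_Z[\hat\sigma/Z]$, $e^{i+1}_Z=e_Z[e^i_Z/Z]$, and stop once a fixpoint is reached; since the right-hand sides live in the free distributive lattice over the atoms (propositions, variables, and guarded one-step expressions $\Mudiam{a}Y,\Mubox{a}Y$), this stabilises within the lattice height $H$, and normalising to DNF after each step keeps every intermediate expression below $2^H$. Then it enumerates the variables in priority order and, working from lowest to highest, eliminates all unguarded occurrences of $X_i$ in every equation by substituting $e'_{X_i}$, where $e'_{X_i}$ is $e_{X_i}$ with each lower-priority $Y$ replaced by the copy $Y^{j}$ sitting in $X_i$'s block. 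This rule is the entire content of ``the relevant copies'': the copy is determined by the block of the variable being inlined, and correctness is argued in automaton terms---skipping an $\epsilon$-step is sound provided the priority of the skipped state is recorded, which is precisely what the copy does. The DNF trick gives the single-exponential size bound uniformly.

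Your proposal leaves exactly these two choices---``the fixpoint type governing that cycle'' and ``the relevant copies''---undefined, and you say so yourself (``that analysis is where the work lies''). But they are not details to be filled in later; they \emph{are} the construction. With the wrong choice the result is simply false: for $X$ in a $\mu$-block and $Y$ in an inner $\nu$-block with $X=Y$, $Y=X\vee p$, recursive inlining from $Y$ hits the cycle at $Y$ and your rule, read literally, would insert $\hat\nu=\true$, giving $Y\equiv\true$ instead of the correct $Y\equiv p$. The paper's ordered, priority-recording substitution avoids ever having to decide ``which type governs a mixed cycle'' because the substitution order and the copy rule together force the outermost block to dominate. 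Your depth-$n$ recursion bound also does not obviously stay single-exponential across $k$ blocks without something like the DNF normalisation the paper uses; as stated it risks compounding exponentials.

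In short: right skeleton, but the two places you flag as subtle are the places where the paper supplies concrete rules, and without them the construction is not yet well-defined.
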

\begin{proof}
For each variable $X$ in block $S_i$ introduce variables $X^j$ in block $S_j$ for all $j > i$. The new variables inherit the old transitions, i.e $e_{X^j} = e_{X}$. %Let $\mathcal{X}'$ denote the new set of variables.

Note that the right-hand sides of the equations can be seen as elements of the free distributive lattice over the set of atoms $\mathcal{P} \cup \{\overline{p}\colon p \in \mathcal{P}\} \cup \mathcal{X} \cup \big( \bigcup_{a \in \Act} \{\Mudiam{a},\Mubox{a}\} \times \mathcal{X}\big)$. The height of this lattice, i.e. the maximal length of a strictly ascending or descending chain, can be bounded from above by $H = 2 |\mathcal{P}| + 3 | \mathcal{X} |$.

We can eliminate $\epsilon$-cycles from an equation $Z = e_Z$ from a block qualified with $\sigma$ in the following way: let $e^0_Z = e_Z
[\hat{\sigma}/Z]$, and let $e^{i+1}_Z = e_Z [e^i_Z /Z]$. The Knaster-Tarski Theorem yields the existence of a $j \leq H$ such that $e_Z
\equiv e^j_Z$. Moreover, $e^j_Z$ does not contain any $\epsilon$-transitions towards $Z$. In order to avoid doubly-exponential blowup, it is
convenient to convert all intermediate equations into disjunctive normal form. This increases the size of an equation exponentially and
takes exponential time. However, since the height of the lattice in question is $H$, no expression exceeds size $2^H$. Moreover, repeating
the conversion to normal form $H$ times still takes only time in $\mathcal{O}(H*2^H)$.

Let $X_1,\dotsc,X_{m'}$ be an enumeration of the variables compatible with the priority order, and assume that $\epsilon$-loops have been removed as per above. Eliminate $\epsilon$-transitions towards variables $X_i$ the following way, starting with the lowest variable $X_{m'}$ and ending with $X_1$: in an equation $X_j = e_{X_j}$, replace all unguarded $\epsilon$-transitions towards $X_i$ with $e'_{X_i}$, where $e'_{X_i}$ is obtained from $e_{X_i}$ by replacing all $Y$ from blocks with lower priority than $X_i$ with $Y^{i}$. After we have done this for $X_i$, the system does not contain any $\epsilon$-transitions towards $X_i$, so after the procedure finishes, the system is $\epsilon$-free. By a normal form argument as above, the overall blowup in the system does not exceed one exponential.

In order to argue why such a replacement preserves the semantics, consider the framework of a parity automaton. Instead of doing an
$\epsilon$-transition towards a state and then doing more transitions towards another state, we do the transitions to the third state right
away. This is correct as long as we record the parity of the state we skipped\footnote{This is the problem with Wilke's construction.}. The
additional copies of the states with low priority server this purpose. Obviously, we can skip a priority if we transition towards a higher
priority later. Correctness of the process follows by induction.
\end{proof}

%%% Local Variables: 
%%% mode: latex
%%% TeX-master: "main.tex"
%%% End: 

\section{Lower Bounds}
\label{sec:lower}

In this section we show that guarded transformation for vectorial formulas and \hes is at least as hard---modulo polynomials---as parity game solving. The problem of whether or not the latter is possible in polynomial time has been open for a long while. We also show that unfolding a formula in vectorial form or an \hes into an equivalent non-vectorial formula is at least as hard as parity game solving. The core of the proofs is a product construction similar to that in Kupferman et al. \cite{KVW00}.

\begin{theorem}[Product Construction]
\label{thm:reductionboolcalc}
For every LTS $\Transsys$ 
%with distinguished state $s_0$
 and every closed $\varphi \in \mucalc$ there is an LTS $\Transsys'$ with a single state $v_0$ and such that for every state $s_0$ in $\Transsys$, there is a vectorial  $\varphi'_{s_0} \in \boolcalc$ such that 
\begin{enumerate}
\item \label{reductionboolcalc-item1} $\Transsys, s_0 \models \varphi$ iff $\Transsys', v_0 \models \varphi'_{s_0}$,
\item \label{reductionboolcalc-item2} $|\Transsys'| = \mathcal{O}(|\varphi| \cdot |\Transsys|)$, and
\item \label{reductionboolcalc-item3} $|\varphi'_{s_0}| = \mathcal{O}(|\varphi|\cdot |\Transsys|)^2$.
\end{enumerate}
\end{theorem}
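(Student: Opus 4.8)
The plan is to encode the transition system $\Transsys$ into the formula rather than leaving it on the model side, so that the resulting target model $\Transsys'$ becomes trivial (a single state with, say, a self-loop on every action, so that modalities become pure identities — this is exactly what turns modal operators into propositional ones). Concretely, I would take the state set $S = \{s_0, \dots, s_r\}$ of $\Transsys$ and introduce, for each subformula $\psi \in \sub{\varphi}$ and each state $s \in S$, a fresh propositional variable $Z_{\psi,s}$ intended to mean ``$\Transsys, s \models_\rho \psi$''. The vectorial (indeed \boolcalc, since no modalities survive) formula $\varphi'_{s_0}$ has entry variable $Z_{\varphi,s_0}$ and one equation per pair $(\psi,s)$ built by the evident structural recursion on $\psi$: an equation $Z_{\psi_1 \vee \psi_2, s} = Z_{\psi_1,s} \vee Z_{\psi_2,s}$, similarly for $\wedge$; for atoms, $Z_{q,s}$ is set to $\true$ if $q \in \ell(s)$ and to $\false$ otherwise, and dually for $\overline q$; for a free variable $X$ of $\varphi$ the equation is simply $Z_{X,s} = Z_{X,s}$ (a harmless self-dependency reflecting that these are themselves free in $\varphi'_{s_0}$ — or one drops them and keeps $Z_{X,s}$ free); for $\Mudiam{a}\psi$ put $Z_{\Mudiam{a}\psi,s} = \bigvee_{s \xrightarrow{a} t} Z_{\psi,t}$ (an empty disjunction, i.e.\ $\false$, if $s$ has no $a$-successor), and dually $Z_{\Mubox{a}\psi,s} = \bigwedge_{s \xrightarrow{a} t} Z_{\psi,t}$; and for a fixpoint subformula $\sigma X.\chi$ of $\varphi$, set $Z_{\sigma X.\chi, s} = Z_{\chi, s}$ — the fixpoint qualifier $\sigma$ being recorded not in this equation but in the block structure. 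The block partition and its ordering are inherited from the fixpoint-nesting (priority) order of $\varphi$: the block for $\sigma X.\chi$ consists of all variables $Z_{X,s}$ for $s \in S$, carries qualifier $\sigma$, and sits in the priority order exactly where $\sigma X.\chi$ sits among the fixpoint subformulas of $\varphi$; the variables $Z_{\psi,s}$ for non-fixpoint $\psi$ can be folded into the innermost block containing them (or, more cleanly, one substitutes them out entirely, so that each block really only contains the variables $Z_{X,s}$). Since the arena $\Transsys'$ has a single state $v_0$ with a self-loop for each action, every $\Mudiam{a}$ and $\Mubox{a}$ over $\Transsys'$ acts as the identity, so $\varphi'_{s_0}$ is indeed (equivalent to one in) \boolcalc.

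The heart of the argument is correctness, item~\eqref{reductionboolcalc-item1}. I would prove, by induction on the block structure (outermost block first, mirroring the definition of the \hes semantics given in Section~\ref{sec:prel}) and within a block by the Knaster--Tarski characterisation of the relevant (least or greatest) fixpoint, the invariant that for every $\psi \in \sub{\varphi}$, every $s \in S$ and every environment $\rho$ on the free variables of $\varphi$, one has $v_0 \in \sem{Z_{\psi,s}}{\rho'}{\Transsys'}$ iff $s \in \sem{\psi}{\rho}{\Transsys}$, where $\rho'$ assigns to $Z_{X,s}$ the set $\{v_0\}$ if $s \in \rho(X)$ and $\emptyset$ otherwise. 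The non-fixpoint cases are immediate from the semantics clauses and the definition of the equations (this is why the statement says ``the translation for the non-fixpoint operators is obviously correct''-style reasoning suffices there). For a block corresponding to $\sigma X.\chi$: the \hes semantics computes the $\sigma$-fixpoint over $2^{\{v_0\}}^S \cong (2^{\{v_0\}})^S \cong 2^S$ of the operator $(A_s)_{s} \mapsto (\text{value of } Z_{\chi,s} \text{ given } Z_{X,t}\mapsto A_t)_s$, which by the inner induction hypothesis equals $s \mapsto (s \in \sem{\chi}{\rho[X\mapsto \{t : v_0 \in A_t\}]}{\Transsys})$; this is precisely the operator $T \mapsto \sem{\chi}{\rho[X \mapsto T]}{\Transsys}$ on $2^S$ whose $\sigma$-fixpoint is $\sem{\sigma X.\chi}{\rho}{\Transsys}$ by definition. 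Because the block orderings on both sides agree and the two operators literally coincide under the bijection $2^{\{v_0\}}^S \cong 2^S$, the fixpoints agree componentwise, which is the invariant for the variables of this block. Taking $\psi = \varphi$, $s = s_0$ and the (empty, since $\varphi$ is closed) environment gives~\eqref{reductionboolcalc-item1}.

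For the size bounds: $\Transsys'$ has one state and at most $|\Act|$ transitions — but $\Act$ may be large, so one should bound $|\Transsys'|$ by the number of actions actually occurring in $\varphi$, which is $\mathcal{O}(|\varphi|)$, giving $|\Transsys'| = \mathcal{O}(|\varphi|) \subseteq \mathcal{O}(|\varphi|\cdot|\Transsys|)$ as in~\eqref{reductionboolcalc-item2} (a self-loop per such action at $v_0$ suffices; alternatively record only that $v_0$ has every needed successor). For~\eqref{reductionboolcalc-item3}: there are at most $|\varphi|\cdot|\Transsys|$ variables $Z_{\psi,s}$, and each equation's right-hand side has size at most the out-degree of $s$ (for modal cases) or a constant (for the others), hence at most $|\Transsys|$; the ``$k + k'$ instead of $kk'$'' sharing convention for vectorial blocks plays no role here since we never reuse a block with a different entry variable. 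So the total size is $\mathcal{O}(|\varphi|\cdot|\Transsys| \cdot |\Transsys|) = \mathcal{O}((|\varphi|\cdot|\Transsys|)^2)$, matching~\eqref{reductionboolcalc-item3}. The main obstacle I anticipate is bookkeeping the block ordering so that it genuinely reflects the priority order of $\varphi$ — in particular making sure that a variable $Z_{X,s}$ is \emph{free} in exactly the blocks where $X$ is free in $\varphi$, and that free variables of $\varphi$ become free variables of $\varphi'_{s_0}$ (so that the statement makes sense even though it is phrased for closed $\varphi$) — together with carefully matching the inductive unfolding of the \hes semantics to the Knaster--Tarski definition of $\sem{\sigma X.\chi}{}{}$; once the bijection $2^{\{v_0\}}^S \cong 2^S$ is set up explicitly, the fixpoint step is formal.
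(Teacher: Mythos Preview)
Your construction is the same product construction as the paper's: modalities become boolean combinations over successors in $\Transsys$, and each fixpoint binder becomes a vectorial block indexed by the states of $\Transsys$. The paper writes this as a direct syntactic translation $\operatorname{tr}_s$ (keeping indexed propositions $q_s$ whose truth is recorded in $\ell'(v_0)$, and indexing only the fixpoint variables rather than all subformulas), but after you substitute out your intermediate $Z_{\psi,s}$ the two outputs coincide up to your inlining of atoms as $\true/\false$. The genuine difference is the correctness argument for item~\eqref{reductionboolcalc-item1}: the paper does not run a Knaster--Tarski induction but instead invokes Stirling's local model-checking game and observes that $(s,\psi)\mapsto(v_0,\operatorname{tr}_s(\psi))$ is an isomorphism of game positions preserving moves, ownership and priorities, so winning strategies transfer --- essentially a one-line argument once the game characterisation is available. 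Your semantic induction via the bijection $(2^{\{v_0\}})^S\cong 2^S$ is equally valid and more self-contained, at the cost of the block-ordering bookkeeping you correctly flag as the delicate point. One small redundancy: the self-loops you put on $v_0$ serve no purpose, since your equations already contain no modal operators and the formula is in \boolcalc\ by construction, not merely equivalent to one over a reflexive point; the paper's $\Transsys'$ accordingly has an empty transition relation. (The self-loop trick does appear later in the paper, in the proof of Theorem~\ref{expand-lower}, where it is genuinely needed to insert vacuous diamonds while preserving semantics.)
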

\begin{proof}
Let $\Transsys = (S,\Transition{}{}{},\ell)$ and let $\varphi$ be defined over propositions $\Prop$ and variables $\Var$. W.l.o.g.\ we assume that $S = \{1,\dotsc,m\}$ for some $m \in \Nat$. Define new sets of propositions $\Prop' := \Prop \times S$ and variables $\Var' := \Var \times S$. We write $X_s$ and $q_s$ instead of $(X,s)$ and $(q,s)$. 

Let $\Transsys' = (\{v_0\},\emptyset,\ell')$ consist of a single state with the following labeling: $q_s \in \ell'(v_0)$ iff $q \in \ell(s)$.

Next we give an inductively defined transformation $\operatorname{tr} \colon S \times \mucalc \to \boolcalc$   which turns an \mucalc 
formula over $\Var$ and $\Prop$ into a vectorial \boolcalc formula over $\Var'$ and $\Prop'$.  
\begin{align*}
\operatorname{tr}_s(q) \enspace &= \enspace q_s \\
\operatorname{tr}_s(\overline{q}) \enspace &= \enspace \overline{q_s} \\
\operatorname{tr}_s(X) \enspace &= \enspace X_s \\
\operatorname{tr}_s(\psi_1 \vee \psi_2) \enspace &= \enspace \operatorname{tr}_s(\psi_1) \vee \operatorname{tr}_s(\psi_2) \\
\operatorname{tr}_s(\psi_1 \wedge \psi_2) \enspace &= \enspace \operatorname{tr}_s(\psi_1) \wedge \operatorname{tr}_s(\psi_2) \\
\operatorname{tr}_s(\Mudiam{a}\psi) \enspace &= \enspace \bigvee \{ \operatorname{tr}_t(\psi) \mid t \in S \text{ with } \Transition{s}{a}{t} \} \\
\operatorname{tr}_s(\Mubox{a}\psi) \enspace &= \enspace \bigwedge \{ \operatorname{tr}_t(\psi) \mid t \in S \text{ with } \Transition{s}{a}{t} \} \\
\operatorname{tr}_s (\sigma X.\psi) \enspace &= 
\enspace \sigma X_s.
  \left\{\begin{array}{lcl}%X_{s}  &.&\operatorname{tr}_s(\psi) \\
                           X_{1}&.&\operatorname{tr}_{1}(\psi) \\
                           &\vdots \\
                           %X_{s-1}&.&\operatorname{tr}_{s-1}(\psi) \\
                           %X_{s+1}&.&\operatorname{tr}_{s+1}(\psi) \\
                           %&\vdots \\
                           X_m&.&\operatorname{tr}_m(\psi) \\
  \end{array}\right\}
\end{align*}
Set $\varphi'_{s_0}$ to be $\operatorname{tr}_{s_0}(\varphi)$. It should be clear that $\varphi'_s$ is indeed a formula of \boolcalc.
For item~\ref{reductionboolcalc-item1} of the theorem, consider Stirling's local model-checking game for \mucalc \cite{stirling-game}.
It is not hard to see that $(s,\psi) \mapsto (v_0, \psi_s)$ maps positions in the model-checking game for $\Transsys,s_0$ and $\varphi$
isomorphically to positions in the game for $\Transsys',v_0$ and $\varphi'_{s_0}$. Moreover, strategy decisions map in the same manner,
hence Verifier has a winning strategy in one game if and only she has one in the other game, and the claim in
item~\ref{reductionboolcalc-item1} follows from that.

The size argument in item~\ref{reductionboolcalc-item2} follows because $\Transsys'$ has only one state, no transitions and $|\Prop|\times
|S|$ many propositions. It remains to argue for item~\ref{reductionboolcalc-item3}, regarding the size of $\varphi'$. Clearly, all steps except
the modal operators and the fixpoints do not produce blowup beyond $(|\varphi| \cdot |\mathcal{T}|)$ many subformulas. In the case of the
modal operators, each instance of a box or a diamond produces exactly one subformula for each edge in $\Transsys$, and edges with the same
target produce the same subformula. A subformula of the form $\sigma X. \psi$ will produce a vectorial fixpoint expression for each state in
$\Transsys$, each system of size $|S|$. However, all these systems are isomorphic except in their entry variable. By our stipulation from
Section~\ref{sec:prel}, these are considered the same subformula. Moreover, since each of the $X_s$ is available in every subformula of the
system, further nested systems are not nested with exponential blowup, but in a linear fashion. The claim for the formula size follows.
\end{proof}

\begin{theorem}
\label{gt-lower}
Guarded transformation for vectorial \mucalc-formulas or vectorial \boolcalc-formulas  is at least as difficult as solving parity games.  
\end{theorem}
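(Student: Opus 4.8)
The plan is to reduce parity game solving to guarded transformation for vectorial \boolcalc-formulas via the product construction of Theorem~\ref{thm:reductionboolcalc}. The key observation is that a guarded vectorial \boolcalc-formula is essentially trivial to evaluate: since it has modal depth $0$ and is guarded, it contains \emph{no} cyclic variable dependencies at all (an unguarded cycle is the only kind of cycle possible without modal operators), so every such formula is equivalent to a purely propositional formula of the same size that can be evaluated over the single-state LTS $\Transsys'$ in linear time.

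First, given a parity game $G$, I would encode it in the standard way as a closed \mucalc-formula $\varphi_G$ together with an LTS $\Transsys_G$ whose states are the game positions, the transitions are the moves, and the propositions distinguish the two players; the fixpoint alternation of $\varphi_G$ matches the priorities of $G$, and $\Transsys_G, s_0 \models \varphi_G$ iff player $0$ wins from position $s_0$. Both $\varphi_G$ and $\Transsys_G$ are of size polynomial in $|G|$ (in fact linear). Second, I would apply Theorem~\ref{thm:reductionboolcalc} to obtain the single-state LTS $\Transsys'$ and, for each position $s_0$, a vectorial \boolcalc-formula $\varphi'_{s_0}$ of size $\mathcal{O}(|\varphi_G|\cdot|\Transsys_G|)^2$, hence polynomial in $|G|$, with $\Transsys_G, s_0 \models \varphi_G$ iff $\Transsys', v_0 \models \varphi'_{s_0}$. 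Third, I would apply the hypothetical polynomial guarded transformation $\tau$ to get a guarded vectorial \boolcalc-formula $\tau(\varphi'_{s_0})$ of size polynomial in $|\varphi'_{s_0}|$, still polynomial in $|G|$, and still satisfied at $v_0$ iff $\varphi'_{s_0}$ is. Fourth, I would evaluate $\tau(\varphi'_{s_0})$ at $v_0$: because it is guarded and modal-free, its guardedness graph has no unguarded cycles and no modal edges, so it has no cycles whatsoever; it can therefore be evaluated bottom-up in polynomial time, giving the answer to whether player $0$ wins from $s_0$ in $G$. Doing this for each $s_0$ solves $G$ in polynomial time.

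The main obstacle is the fourth step: arguing rigorously that a guarded vectorial \boolcalc-formula can be evaluated in polynomial time. The point is that guardedness rules out unguarded cycles in the guardedness graph, and since there are no modal operators there are no guarded edges either, so the variable dependency graph of the associated equation system is acyclic; each block's fixpoint operator then has a unique fixpoint that is computed by a single substitution pass, and one proceeds block by block from the innermost outwards. Over the one-state LTS each variable's value is just an element of $\{\emptyset,\{v_0\}\}$, i.e.\ a bit, so the whole evaluation reduces to evaluating a polynomial-size Boolean circuit, which is in polynomial time. One must also take care that the polynomial bounds compose: since $\tau$ is assumed polynomial and the product construction is polynomial, the composition remains polynomial, and there are only polynomially many positions $s_0$ to handle.

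I should note that the same argument establishes the analogous statement for \hes, by Lemma~\ref{hes-to-vec}, which converts \hes into vectorial formulas with polynomial blowup while preserving guardedness; I would remark on this immediately after the proof rather than repeating it.
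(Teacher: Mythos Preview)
Your approach is the same as the paper's: encode the parity game via the Walukiewicz formula, apply the product construction of Theorem~\ref{thm:reductionboolcalc}, feed the resulting vectorial \boolcalc-formula to the hypothetical polynomial $\tau$, and then observe that a guarded Boolean formula is trivially evaluable over the one-state structure.

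There is one gap. In your third and fourth steps you assume that $\tau(\varphi'_{s_0})$ is still a \boolcalc-formula (``guarded and modal-free''). But the definition of guarded transformation only requires $\tau(\varphi)\equiv\varphi$ and that $\tau(\varphi)$ be guarded; it does not require the output to remain in \boolcalc. A guarded transformation for vectorial \mucalc applied to a \boolcalc-formula may well introduce modal operators, and nothing in the definition forbids a guarded transformation for \boolcalc from doing the same. Your acyclicity argument then no longer applies directly. The paper closes this gap by exploiting the specific structure of $\Transsys'$: it has a single state and \emph{no transitions}, so any $\Mubox{a}\psi$ evaluates to $\true$ and any $\Mudiam{a}\psi$ to $\false$; hence all modal subformulas introduced by $\tau$ can be replaced by constants, and the result is again strictly Boolean and still guarded. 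After this fix your fourth step goes through. Incidentally, your formulation of that step---the dependency graph is acyclic, so evaluate bottom-up---is more accurate than the paper's claim that the guarded Boolean formula ``cannot contain any occurrences of fixpoint variables at all''; acyclicity is what one actually gets, and it suffices.
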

\begin{proof}
We show that any polynomial guarded transformation for \boolcalc-formulas in vectorial form yields a  polynomial solution algorithm for parity games. The statement for \mucalc-formulas follows from that because   every guarded transformation for \mucalc-formulas is also one for \boolcalc-formulas.

Assume that there is a polynomial guarded transformation procedure $\tau$ for vectorial \boolcalc-formulas. Given a parity game, we can
treat it as an LTS with one accessibility relation and labellings for ownership and priority of states. Solving the parity game means
deciding whether the first player wins from the initial vertex, and this is equivalent to model-checking Walukiewicz' formula
\cite{MR1902097} for the corresponding priority. This formula is of size linear in the number of priorities, hence it is polynomial in the
size of the parity game. Via the product construction from Theorem~\ref{thm:reductionboolcalc}, we obtain a vectorial \boolcalc-formula
$\varphi$ that is also of size polynomial in the size of the parity game, and a one-state transition system $\Transsys$ such that $\Transsys
\models \varphi$ if and only if the first player wins the parity game. Consider $\tau(\varphi)$. Because $\tau$ runs in polynomial time, the
size of $\tau(\varphi)$ is still polynomial in the size of the parity game. Moreover, since the truth value of the \boolcalc-formula
$\varphi$ only depends on the state $v_0$, this must be true for $\tau(\varphi)$ as well. In effect, all modal operators introduced by
$\tau$ are vacuous and can be replaced by $\true$ in case of boxes, and by $\false$ in case of diamonds. The resulting vectorial formula is
again strictly boolean, but also guarded. Hence, it cannot contain any occurrences of fixpoint variables at all, since any such occurrence
would be unguarded. Therefore, all fixpoint quantifiers can be removed. The resulting formula is purely propositional and can be solved in
polynomial time by a simple bottom-up algorithm. This yields the desired polynomial solution for the parity game.
\end{proof}
The \boolcalc part of this theorem corresponds to Neumann and Seid's observation that ``finding equivalent guarded systems in general cannot be easier than computing solutions of hierarchical systems of Boolean equations'' \cite{conf/csl/SeidlN99}.

\begin{theorem}
\label{expand-lower}
Transforming a vectorial \mucalc-formula  or an \hes to a non-vectorial \mucalc-formula is at least as difficult as solving parity games. This holds even if the transformation only accepts $\epsilon$-free formulas and equation systems and even if it is allowed to produce unguarded formulas.
\end{theorem}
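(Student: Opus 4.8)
The plan is to show that a polynomial-time transformation $t$ taking every vectorial \mucalc-formula (resp.\ every \hes) to an equivalent non-vectorial \mucalc-formula would give a polynomial-time algorithm for parity games; I would compose such a $t$ with the product construction of Theorem~\ref{thm:reductionboolcalc} and then exploit that model checking a non-vectorial \mucalc-formula over a one-state LTS is easy.

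\emph{Building a vectorial instance.} As in the proof of Theorem~\ref{gt-lower}, I would start from a parity game $G$, take Walukiewicz' formula of size polynomial in $G$ that holds at the initial vertex of (the LTS of) $G$ iff the first player wins $G$, and apply Theorem~\ref{thm:reductionboolcalc} to obtain a one-state LTS $\Transsys'$ with state $v_0$ and a vectorial \boolcalc-formula $\varphi'$ of size polynomial in $G$ with $\Transsys', v_0 \models \varphi'$ iff the first player wins $G$. Since $\varphi'$ has no modal operators it is not $\epsilon$-free, so to also cover that strengthening let $\Transsys''$ be the one-state LTS whose only transition is a self-loop on a fresh action $a$, and let $\varphi''$ arise from $\varphi'$ by replacing every occurrence of every fixpoint variable $X$ with $\Mudiam{a}X$. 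Over $\Transsys''$ both $\Mudiam{a}$ and $\Mubox{a}$ act as the identity on the two-element lattice, and $\varphi'$ does not depend on the transition relation at all, so $\Transsys'', v_0 \models \varphi''$ iff the first player wins $G$. Every variable occurrence in $\varphi''$ now sits under a modal operator, hence $\varphi''$ is guarded, in particular $\epsilon$-free; read as an equation system it is an $\epsilon$-free \hes, which takes care of the \hes version of the statement. Note that $t(\varphi'')$ is allowed to be unguarded, matching the hypothesis that $t$ need not preserve guardedness.

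\emph{Composing with $t$.} Applying $t$ yields, in time polynomial in $|\varphi''|$ and hence in $|G|$, a non-vectorial \mucalc-formula $\psi = t(\varphi'')$ of polynomial size with $\psi \equiv \varphi''$, so $\Transsys'', v_0 \models \psi$ iff the first player wins $G$. The remaining, and only genuinely new, ingredient is that model checking a non-vectorial \mucalc-formula over a one-state LTS takes polynomial time. On the two-element lattice $\{\emptyset,\{v_0\}\}$ the least fixpoint of a monotone map $f$ is $f(\emptyset)$ and its greatest fixpoint is $f(\{v_0\})$, so over $\Transsys''$ we have $\sigma X.\chi \equiv \chi[\hat\sigma/X]$ for every fixpoint subformula, and also $\Mudiam{a}\chi \equiv \chi$ and $\Mubox{a}\chi \equiv \chi$. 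Eliminating modal operators and then fixpoint operators from the innermost binder outward by these equivalences only ever replaces a leaf of the formula by a constant, so it never increases the number of subformulas; after $O(|\psi|)$ such steps $\psi$ has become a purely propositional formula of size $O(|\psi|)$, which is evaluated under the labelling of $v_0$ in linear time. This decides the winner of $G$ in time polynomial in $|G|$.

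\emph{Main obstacle.} The crux is precisely this last observation, and it is where the tree-shaped variable dependency of non-vectorial \mucalc-formulas is used essentially: the constant-substitution reduction is sound because each fixpoint variable is eliminated exactly once, working from the innermost binder outward. No such reduction is available for vectorial formulas or \hes, where a block of mutually recursive equations nested underneath alternating blocks forces repeated Beki\'c-style unfolding; so the difficulty has to sit in the unravelling step $t$, and not in the model-checking problem that follows it.
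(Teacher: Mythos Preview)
Your argument is essentially the paper's: product construction on Walukiewicz' formula to get a vectorial \boolcalc-formula over a one-state LTS, add a self-loop and diamonds to obtain an $\epsilon$-free instance, apply the hypothetical transformation $t$, and then exploit that over a one-state LTS modalities are trivial and $\sigma X.\chi \equiv \chi[\hat\sigma/X]$, so evaluation is polynomial. The paper cites \cite{conf/csl/SeidlN99} for this last step rather than spelling it out, but your direct two-element-lattice argument is equally valid.

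One small correction is needed for the $\epsilon$-freeness claim. Guarding only the explicit variable occurrences in the vectorial formula $\varphi'$ does not make the associated equation system $\epsilon$-free: in $\varphi'$ the body of each outer-block variable \emph{is} a nested vectorial block, not a variable, and when you pass to equational form this becomes an unguarded ``downward'' edge from the outer variable to the inner block's entry variable. The paper therefore additionally wraps every fixpoint \emph{subformula} with a diamond (replacing $\sigma X.\psi_X$ by $\Diamond\,\sigma X.\psi_X$ as well as $X$ by $\Diamond X$). With that tweak your $\varphi''$ is genuinely $\epsilon$-free and the rest of your proof goes through verbatim.
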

\begin{proof}
Assume that we have a polynomial transformation $\tau$ from $\epsilon$-free equation systems to \mucalc-formulas, and assume that we are given a parity game and a state in that parity game. We want to use the product construction from Theorem~\ref{thm:reductionboolcalc}, i.e. we want to construct a vectorial formula $\varphi'$ and a one-state transition system $\Transsys',v_0$ such that $\Transsys',v_0 \models \varphi'$ if and only if Verifier wins the parity game from the given state. 

Unfortunately, the vectorial formula $\varphi'$ from this theorem does not contain any modal operators, so it is far from being guarded, let
alone $\epsilon$-free. To remedy this problem, replace the structure $\Transsys',v_0$ by a version $\Transsys'',v_0$ that has a loop at the
only vertex. Clearly, a formula or equation system without modal operators is true in the old structure if and only if it is true in the new
structure. Moreover, for any $\mucalc$-formula or equation $\psi$, we have $\Transsys'',v_0 \models \psi$ if and only if $\Transsys'',v_0
\models \Diamond\psi$. Hence, we can replace every occurrence of a fixpoint variable $X$ or a fixpoint subformula $\sigma X. \psi_X$ in
$\varphi'$ by $\Diamond X$ and $\Diamond \sigma X.\psi_X$ without changing the truth value of the formula. The resulting formula $\varphi''$
is now $\epsilon$-free, and we can apply $\tau$. Since $\tau$ produces equivalent formulas, we have
\[
\Transsys'',v_0 \models \tau(\varphi'') \enspace \Leftrightarrow \enspace \Transsys'',v_0 \models \varphi'' 
\enspace \Leftrightarrow \enspace \Transsys',v_0 \models\varphi' \ .
\]
But since $\Transsys'',v_0 \models \psi$ if and only if $\Transsys'',v_0 \models \Diamond\psi$ for all $\psi$, and similar for modal boxes, we can replace any occurrence of modal operators of the form $\Diamond \psi$ or $\Box \psi$ by $\psi$. The resulting formula $\varphi'''$ does not contain any modal operators, and $\Transsys'',v_0 \models \varphi'''$ if and only if the first player wins the parity game from the given state. Since \mucalc-formulas without modal operators can be model-checked in polynomial time \cite{conf/csl/SeidlN99}, and since all the transformations above are polynomial, the transformation $\tau$ gives rise to a polynomial procedure for solving parity games.
\end{proof}

%%% Local Variables: 
%%% mode: latex
%%% TeX-master: "main.tex"
%%% End: 

\section{Conclusion}
\label{sec:eval}

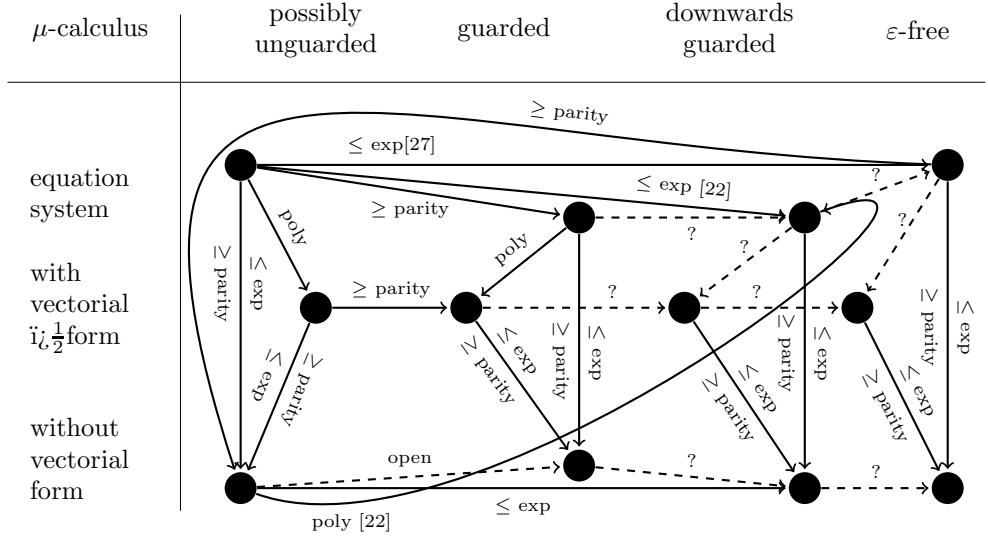
\begin{figure}[t]
\begin{tikzpicture}[allow upside down]
  \tikzstyle{blop}=[fill=black,shape=circle,inner sep=1.5mm]

  \node at (2,6.7) {$\mu$-calculus}; 
  \node at (5,6.7) {\parbox{2cm}{\centering possibly \\ unguarded}}; 
  \node at (7.5,6.7) {guarded};
  \node at (10.5,6.7) {\parbox{2cm}{\centering downwards \\ guarded}};
  \node at (13,6.7) {$\epsilon$-free};
  \node at (2.2,4.5) {\parbox{2cm}{equation \\ system}}; 
  \node[blop] (lu)  at (4,4.9) {}; 
  \node[blop] (mlu) at (8.5,4.2) {};
  \node[blop] (mru) at (11.5,4.2) {};
  \node[blop] (ru)  at (13.4,4.9) {}; 
  \node at (2.2,3) {\parbox{2cm}{with \\ vectorial \\�form}}; 
  \node[blop] (lm)  at (5,3.0) {}; 
  \node[blop] (mlm) at (7,3.0) {};
  \node[blop] (mrm) at (9.9,3.0) {};
  \node[blop] (rm)  at (12.2,3.0) {}; 
  \node at (2.2,1) {\parbox{2cm}{without \\ vectorial \\ form}}; 
  \node[blop] (ld)  at (4,.6) {}; 
  \node[blop] (mld) at (8.5,.9) {}; 
  \node[blop] (mrd) at (11.5,.6) {}; 
  \node[blop] (rd)  at (13.4,.6) {}; 

  \draw[very thin] (3.2,0.3) -- (3.2,6.9);
  \draw[very thin] (0.9,6.0) -- (13.9,6.0);

  \draw[thick,->,dashed,shorten >=1pt] (ld) -- node [above] {\scriptsize {open}} (mld) ;
   \draw[thick,->,shorten >=1pt] (ld) -- node [below] {\scriptsize $\leq$ exp } (mrd) ;
   \draw[thick,->,shorten >=1pt] (ld)  to[out=-20,in=20] node [pos=0.12,below] {{\scriptsize poly \cite{conf/csl/SeidlN99}}}(mru) ; 
%      \draw[thick,->,shorten >=1pt] (ld)  -- node [pos=0.12,above] {{\scriptsize poly \cite{conf/csl/SeidlN99}}%}(mru) ; 

 % was above, 0.44
    \draw[thick,->,shorten >=1pt] (lu) -- node [below] {{\scriptsize {$\ge$ parity}}} (mlu);
  \draw[thick,->,shorten >=1pt] (lm) -- node [above,sloped] {{\scriptsize {$\ge$ parity }}}   (mlm);
   \draw[thick,->,shorten >=1pt] (lu) -- node [above,sloped] {{\scriptsize {poly}}}  (lm);
   \draw[thick,->,shorten >=1pt] (mlu) -- node [rotate=180,above,sloped] {{\scriptsize {poly}}}  (mlm);
   \draw[thick,->,shorten >=1pt] (lu) -- node [pos=0.35,above,sloped] {{\scriptsize {$\le$ exp }}} node [pos=0.35,below,sloped] {\scriptsize {$\ge$ parity} } (ld);
   \draw[thick,->,shorten >=1pt] (lm) -- node [pos=0.4,above,sloped] {{\scriptsize {$\ge$ parity}}} node [pos=0.4,below,sloped] {{\scriptsize {$\le$ exp}}}  (ld);
   \draw[thick,->,shorten >=1pt] (mlm) -- node [pos=0.3,below,sloped] {{\scriptsize {$\ge$ parity}}} node [pos=0.3,above,sloped] {{\scriptsize {$\le$ exp}}} (mld);
   \draw[thick,->,shorten >=1pt] (mru) -- node [above,sloped] {{\scriptsize {$\le$ exp}}} node [below,sloped] {\scriptsize $\ge$ parity }  (mrd);

\draw[thick,->,shorten >=1pt] (lu) -- node [pos=0.8,above,sloped] {{\scriptsize {$\le$ exp \cite{conf/csl/SeidlN99} }}} (mru);

  \draw[thick,->,dashed,shorten >=1pt] (mld) -- node [above] {\scriptsize {?}} (mrd) ;
  \draw[thick,->,dashed,shorten >=1pt] (mrd) -- node [above] {\scriptsize {?}} (rd) ;
  \draw[thick,->,dashed,shorten >=1pt] (mlm) -- node [pos=0.7,above] {\scriptsize {?}} (mrm) ;
  \draw[thick,->,dashed,shorten >=1pt] (mrm) -- node [pos=0.4,above] {\scriptsize {?}} (rm) ;
  \draw[thick,->,dashed,shorten >=1pt] (mlu) -- node [below] {\scriptsize {?}} (mru) ;
  \draw[thick,->,dashed,shorten >=1pt] (mru) -- node [above] {\scriptsize {?}} (ru) ;
  \draw[thick,->,shorten >=1pt] (ru) -- node [above,sloped] {\scriptsize {$\le$ exp}} node [below,sloped] {\scriptsize $\ge$ parity }  (rd) ; 
 
   \draw[thick,->,shorten >=1pt] (mlu) -- node [pos =0.55,above,sloped] {\scriptsize {$\le$ exp}} node [pos=0.59,below,sloped] {\scriptsize $\ge$ parity }  (mld) ; 
   
  \draw[thick,->,dashed,shorten >=1pt] (mru) -- node [above] {\scriptsize {?}} (mrm) ; 
  \draw[thick,->,shorten >=1pt] (mrm) -- node [above,sloped] {\scriptsize {$\le$ exp}} node [below,sloped] {\scriptsize $\ge$ parity }  (mrd) ; 
 \draw[thick,->,dashed,shorten >=1pt] (ru) -- node [above] {\scriptsize {?}} (rm) ; 
  \draw[thick,->,shorten >=1pt] (rm) --  node [above,sloped] {\scriptsize {$\le$ exp}} node [below,sloped] {\scriptsize $\ge$ parity }  (rd) ; 
  \draw[thick,->,shorten >=1pt] (lu) -- node [pos=0.2,above] {\scriptsize {$\le$ exp}\cite{conf/fsttcs/Wilke99}} (ru) ; 

   \draw[thick,->,shorten >=1pt] (ru)  to[out=178,in=108,distance=7.3cm] node [pos=0.25,above,sloped,rotate=180] {{\scriptsize $\ge$ parity }}(ld) ;

\end{tikzpicture}
\vskip-3mm
\caption{State of the art on guarded transformations for the modal $\mu$-calculus.}
\label{fig:gt}
\end{figure}

In Section~\ref{sec:oldgt} we showed that the known guarded transformations produce an exponential blowup in the worst case. We also presented
exponential algorithms for the elimination of $\epsilon$-transitions in alternating parity tree automata and the translation from \hes or
vectorial formulas into flat \mucalc-formulas.

In Section~\ref{sec:lower} we showed that a polynomial guarded transformation for vectorial formulas entails a polynomial solution algorithm
for parity games, and the same is true for \hes. We also proved that polynomial translation from vectorial formulas to non-vectorial
formulas, or from \hes to \mucalc-formulas, yields the same. The existence of a polynomial guarded transformation for \emph{non-vectorial}
formulas is still open, and it is possible that such a transformation exists without yielding a polynomial solution algorithm for parity
games. Figure~\ref{fig:gt} gives an overview over the current state of research.

\paragraph*{Consequences and Corrections.}
Several constructions and procedures that deal with the modal $\mu$-calculus directly or indirectly
use guarded transformation. Often they use the (possibly) false claim that guarded transformation can
be done at a linear or quadratic blow-up. We examine consequences of the fact that according to current
knowledge, guarded transformation is exponential.% in the alternation depth of a formula.

As a first step, it is interesting to check whether any of the results from
Kupferman/Vardi/Wolper's seminal paper on automata for branching-time temporal logics \cite{KVW00} crucially 
rely on a polynomial guarded transformation. Fortunately, this is not the case. The product automaton 
construction \cite[Prop.~3.2]{KVW00} also works if the input automaton is
not $\epsilon$-free, and the resulting product automaton has no $\epsilon$-transitions. This allows the 
subsequent Thm.~3.1 to be applied, which needs $\epsilon$-free automata. Hence, all results of
\cite{KVW00} that rely on a polynomial guarded transformation remain true.

In \cite{journals/tcs/HenzingerKM06}, the reliance on a polynomial guarded transformation causes
problems. It is not immediately obvious that the complexity results for satisfiability of existential
and universal \mucalc and alternation-free \mucalc (claimed to be NP-complete) as well as derived
results should hold for unguarded formulas. A preceding transformation into guarded form will exhaust
the complexity limitations, so further investigation on this work is necessary.

Mateescu claims that model checking for $\mucalc$ on acyclic structures can be done in polynomial time. He
observes that, on acyclic structures, least and greatest fixpoints coincide. Thus, the alternation hierarchy
collapses to its alternation-free fragment on such structures. It is known that this fragment can be model-checked
in linear time \cite{journals/fmsd/CleavelandS93}. However, least and greatest fixpoints only coincide for
guarded formulas: clearly $\mu X.X$ and $\nu X.X$ are not equivalent, not even on acyclic structures. The 
collapse result is still true but with current technology at hand, we need to assume an exponential
blow-up in formula size. Thus, model checking guarded formulas on acyclic structures can be done in
polynomial time, arbitrary formulas still require exponential time. 
%We also note that the lower bound proof in Sect.~\ref{sec:lower} uses structures that are trivially acyclic. 

\paragraph*{Automata, $\epsilon$-transitions and Vectorial Form.}
It is standard practice to construct alternating parity tree automata from guarded \mucalc-formulas, or weak alternating tree automata from guarded alternation-free formulas \cite{focs91*368,Wilke:2001:BBMS,GKL:Gandalf12}. The resulting automata are of size linear in the size of 
the formula. The situation is different for unguarded (alternation-free) \mucalc-formulas because of the absence of a polynomial guarded transformation. Currently, we need to assume a blow-up that is exponential in the 
%alternation depth 
size of the formula when translating arbitrary, and therefore possibly unguarded, \mucalc-formulas into alternating parity tree automata.
If $\epsilon$-transitions are allowed in alternating parity tree automata then it is possible to translate arbitrary $\mucalc$-formulas into such automata at a linear blow-up only; the known constructions can be modified accordingly. Since, by Theorem~\ref{gt-lower}, eliminating $\epsilon$-transitions from alternating automata must be considered exponential, obtaining an $\epsilon$-free alternating automaton from an unguarded \mucalc-formula incurs an exponential blowup by current state of research.

The above means that the size of an alternating automaton cannot be measured in terms of number of states.
Instead, such a notion of size has to include the size of the transition relation, which can easily be exponentially larger. An example of the confusion that the wrong measure can cause is Kupferman/Vardi's work on alternating automata. They show that nonemptiness of weak alternating automata can be checked in linear time when size is measured including the transition relation \cite{KVW00}. This result is then used in a context where the size is measured in the number of states \cite{conf/stoc/KupfermanV98}. We do believe that this claim, and subsequent results, are to be questioned.%, although similar results hold via another proof. 
%%An example is $1$-letter-nonemptiness of nondeterministic parity tree automata with $n$ states and 
%%index $k$, which is in $\mathcal{O}(n^k)$ by a proof via parity games.\TODO{Stimmt so nicht, glaube ich}.

Another problematic custom is that authors often use vectorial \mucalc and flat \mucalc interchangebly, or sometimes even alternating parity automata and \mucalc. Since translating parity tree automata back into \mucalc-formulas is exponential as well, the latter seems inappropriate. An example of the former is \cite{DBLP:journals/tocl/KupfermanV05}, where weak alternating automata are translated linearly into vectorial alternation free \mucalc, referred to as \mucalc only. This seems confusing, since unraveling the vectorial formula obtained will incur exponential blowup. We think that vectorial form has its use, in particular because weak automata translate so easily into vectorial form, but the use of vectorial form should always be clearly labeled as such, in order to avoid hidden exponential gaps.

\paragraph*{Further Research.}
The existence of a polynomial guarded transformation for flat \mucalc is still open, and it is also not known whether the existence of such a procedure would entail parity games being solvable in polynomial time. Interestingly enough, all known guarded transformation procedures produce downward guarded formulas, and we strongly suspect that any reasonable candidate for a polynomial guarded transformation will do so as well. This is because the notion of guardedness as such seems too weak: aiming for downward guardedness gives enough structure to the formula to establish reasonable induction invariants. On the other hand, showing that plain guarded transformation is subject to some lower bound suffers from the unstructured notion of guardedness. This is taken to an extreme in Berwanger's two-variable version \cite{DBLP:journals/sLogica/Berwanger03} of the Walukiewicz formulas: his formulas are not guarded at all, but are \emph{de facto} guarded. The means that, in a tableau, any two iterations of the same fixpoint variable will have a modal operator in between. Judging from the overall picture, we think that downward guardedness might be the right target to attack.

There are several arrows in Figure~\ref{fig:gt} where a transformation can be done with exponential blowup, but the lower bound is only that
any polynomial procedure would also entail a polynomial solution algorithm for parity games. Closing theses gaps can be done in two ways:
giving a polynomial solution algorithm for the problem---and establishing that parity games can be solved in polynomial time in the
process---or establishing an exponential lower bound for the transformation. %We think that either of these needs a fresh idea.

Finally, the converse questions are also open: does a polynomial solution algorithm for parity games entail a polynomial guarded transformation procedure? Does it entail a polynomial procedure to unravel vectorial formulas or \hes into flat \mucalc-formulas? Of course, the ability to solve parity games in polynomial time allows polynomial model-checking for vectorial formulas, but the problems of guarded transformation, respectively of unraveling formulas, seem to be independent of that.

%%% Local Variables: 
%%% mode: latex
%%% TeX-master: "main.tex"
%%% End: 

\bibliographystyle{plain}
\bibliography{./literature}

\end{document}